\newenvironment{properties}[1]{
	\begin{enumerate}[label={\normalfont (\ref{#1}\alph*)},labelwidth=*,leftmargin=*,itemsep=0pt,topsep=3pt]
}{
	\end{enumerate}
}
\DeclareMathOperator{\E}{\mathbb E}
\DeclareMathOperator{\union}{\bigcup}
\newcommand{\R}{\mathbb{R}}
\newcommand{\Z}{\mathbb{Z}}
\newtheorem{theorem}{Theorem}[section]
\newtheorem{lemma}[theorem]{Lemma}
\newtheorem{corollary}[theorem]{Corollary}
\newtheorem{definition}[theorem]{Definition}
\newtheorem{claim}[theorem]{Claim}
\newcommand{\set}[1]{\left\{#1\right\}}
\newcommand{\cardinal}[1]{\left|#1\right|}
\newcommand{\floor}[1]{\left\lfloor#1\right\rfloor}
\newcommand{\ceil}[1]{\left\lceil#1\right\rceil}
\newcommand{\calJ}{{\mathcal J}}
\newcommand{\calS}{{\mathcal S}}
\newcommand{\calT}{{\mathcal T}}
\newcommand{\eps}{\epsilon}
\newcommand{\tcalS}{{\widetilde \calS}}
\newcommand{\tC}{{\widetilde C}}
\newcommand{\dav}{d_{\mathsf {av}}}
\newcommand{\LP}{\mathsf{LP}}
\newcommand{\KM}{{\sf KM}\xspace}
\newcommand{\CKM}{{\sf CKM}\xspace}
\def\thm@space@setup{%
  \thm@preskip=5pt \thm@postskip=4pt
}
\renewcommand{\paragraph}{%
  \@startsection{paragraph}{4}%
  {\z@}{1ex \@plus 1ex \@minus .2ex}{-1em}%
  {\normalfont\normalsize\bfseries}%
}
\begin{document}

%Notations:
%C : set of clients
%B, F, S, U : sets of facilities
%E : edges
%R, V, J : sets of client representatives
%i : facilities
%j : clients
%u : capacities
%w : real numbers
%v, r : client representatives
%\ell : integer parameters
%x, y, z : LP solutions
%d : distance
%D : real number, charging quantity
%Q : real number, a general quantity
%\phi, \alpha, \beta : functions

\title{Approximating capacitated $k$-median with $(1+\eps)k$ open facilities}
\author{Shi Li\thanks{TTIC, shili@ttic.edu}}
\date{}
\maketitle 
\begin{abstract}

In the capacitated $k$-median (\CKM) problem, we are given a set $F$ of facilities, each facility $i \in F$ with a capacity $u_i$,  a set $C$ of clients, a  metric $d$ over $F \cup C$ and an integer $k$.  The goal is to open $k$ facilities in $F$ and connect the clients $C$ to the open facilities such that each facility $i$ is connected by at most $u_i$ clients, so as to minimize the total connection cost.  

In this paper, we give the first constant approximation for \CKM, that only violates the cardinality constraint by a factor of $1+\eps$.  This generalizes the result of [Li15], which only works for the uniform capacitated case.   Moreover, the approximation ratio we obtain is $O\big(\frac{1}{\eps^2}\log\frac1\eps\big)$, which is an exponential improvement over the ratio of $\exp(O(1/\epsilon^2))$ in [Li15].   The natural LP relaxation for the problem, which almost all previous algorithms for \CKM are based on, has unbounded integrality gap even if $(2-\eps)k$ facilities can be opened.  We introduce a novel configuration LP for the problem, that overcomes this integrality gap.

\end{abstract}
\thispagestyle{empty}

\newpage
\setcounter{page}{1}

\section{Introduction}

In the capacitated $k$-median (\CKM) problem, we are given a set $F$ of facilities, each facility $i$ with a capacity $u_i \in \Z_{>0}$,  a set $C$ of clients, a  metric $d$ over $F \cup C$ and an upper bound $k$ on the number of facilities we can open.  The goal is to find a set $S \subseteq F$ of at most $k$ open facilities and a connection assignment $\sigma : C \to S$ of clients to open facilities such that $\cardinal{\sigma^{-1}(i)} \leq u_i$ for every facility $i \in S$, so as to minimize the connection cost $\sum_{j \in C}d(j, \sigma(j))$.  

A special case of the problem is the classic $k$-median (\KM) problem, in which all facilities $i \in F$ have $u_i = \infty$. There has been a steady stream of papers on approximating \KM \cite{AGK01, CG99, CGT99, JMS02, JV01, LS13}.  The current best approximation ratio for the problem is $2.611+\eps$ due to Byrka et al.\ \cite{BPR15}.  On the negative side, it is NP-hard to approximate the problem within a factor of $1+2/e-\eps\approx 1.736$ \cite{JMS02}. 

Very little is known about the \CKM problem.  Most previous algorithms \cite{ABG14, BFR15, CGT99, CR05, Li14} are based on the natural LP relaxation for the problem, which has unbounded integrality gap even if all capacities are the same, and one is allowed to violate the cardinality constraint (the constraint that at most $k$ facilities can be open) \emph{or} the capacity constraints by a factor of $2-\eps$. As a result, these algorithms can only give pseudo-approximation algorithms for \CKM.  Indeed, for algorithms based on the natural LP relaxation, near-optimal cardinality violation and capacity violation constants have been obtained by Gijswijt and Li \cite{ABG14} and \cite{Li14}, respectively.  \cite{ABG14} gave a $(7+\eps)$-approximation for \CKM that opens $2k$ facilities. \cite{Li14} gave an $O(1/\eps)$-approximation with $2+\eps$ factor violation on the capacity constraints.  The first algorithm that breaks the factor of 2 barrier on the cardinality violation factor is given by Li \cite{Li15}. \cite{Li15} gave an $\exp(O(1/\eps^2))$-approximation algorithm with $(1+\eps)$-violation in the number of open facilities for the uniform \CKM, the special case of \CKM when all facilities $i$ have the same capacity $u_i = u$. His algorithm is based on a stronger LP called ``rectangle LP''. 

There are two slightly different versions of \CKM. The version as we described is the hard \CKM problem, where each facility can be opened once; while in the soft \CKM problem, multiple copies of each facility can be open. Hard \CKM is more general as one can convert a soft \CKM instance to a hard \CKM instance by making $k$ copies of each facility. The result of \cite{CR05} is for soft \CKM while the other mentioned results are for hard \CKM. In particular, \cite{Li15} showed that when the capacities are the same, hard \CKM and soft \CKM are equivalent, up to a constant loss in the approximation ratio. 

%\paragraph{Other related results.}  Closely related to \KM and \CKM are the uncapacitated facility location (\UFL) and capacitated facility location (\CFL) problems. \UFL and \CFL have similar inputs as \KM  and \CKM respectively, but instead of giving an upper bound $k$ on the number of facilities we can open, it specifies an opening cost $f_i$ for each facility $i \in F$. The objective is the sum of the cost for opening facilities and the total connection cost.  There has been a long line of works giving constant approximations for \UFL~\cite{Byr07,CG99,CS04,JMM03,JMS02,JV01,KPR98,LV92B, MYZ06,STA97}. The current best approximation ratio for \UFL is $1.488$ due to Li~\cite{Li11}, while the hardness of approximation is $1.463$~\cite{GK98}.  In constrant to \CKM, many constant approximation algorithms are known for \CFL(\cite{AAB10, BGG12, CW05, KPR98, MYZ06, ZCY05}). These algorithms are either based on local-search, or based on the natural LP relaxation but allowing opening multiple copies of the same facility. It was a long-standing open problem to find an LP-based algorithm for \CFL (in which we can not open multiple copies of a facility). This open problem was recently solved by An et al.\ \cite{ASS14}. 

\paragraph{Our contributions} We generalize the result of \cite{Li15} to the non-uniform \CKM problem. Moreover, we improve the approximation ratio exponentially from $\exp(O(1/\eps^2))$ in \cite{Li15} to $O\big(\frac{1}{\eps^2}\log\frac1\eps\big)$.  To be more precise, we give an $\Big(1+\eps, O\big(\frac{1}{\eps^2}\log\frac1\eps\big)\Big)$-approximation algorithm for soft \CKM with running time $n^{O(1/\eps)}$, where an $(\alpha, \beta)$-approximation algorithm is an algorithm which outputs a solution with at most $\alpha k$ open facilities whose cost is at most $\beta$ times the optimum cost with $k$ open facilities. 
\begin{theorem}
\label{thm:main} 
Given a soft capacitated \CKM instance, and a parameter $\eps > 0$, we can find in $n^{O(1/\eps)}$ time a solution to the instance with at most $(1+\eps)k$ open facilities whose cost is at most $O\big(\frac{1}{\eps^2}\log\frac1\eps\big)$ times the cost of optimum solution with $k$ open facilities. 
\end{theorem}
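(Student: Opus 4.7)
The plan is to design a novel configuration LP whose variables encode, for each facility $i$, a feasible local \emph{configuration} (a compact description of $i$'s opened copies together with the rough profile of the clients it serves), rather than only the aggregate opening $y_i$ and fractional assignment $x_{ij}$ used in the natural LP. This is the key departure from prior work: the natural LP lets a client's demand be spread fractionally across arbitrarily many far-away facilities, and that is precisely what drives the $(2-\eps)$ integrality gap of the natural LP even in the uniform case. By forcing each facility to commit to one configuration, bad fractional structures should be ruled out and the gap should drop to a constant at the price of an $\eps$-violation of the cardinality constraint.

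Concretely, I would fix a geometric sequence of distance thresholds $r_0 < r_1 < \cdots < r_L$ at $L = O(\log 1/\eps)$ scales, and let a configuration of facility $i$ specify (a)~a number $t_i \geq 0$ of copies opened at $i$ (soft \CKM permits $t_i \geq 1$), and (b)~for each $\ell$, the number of clients inside $B(i, r_\ell)$ that are served by $i$, rounded to powers of $1+\eps$. Each facility then has $n^{O(1/\eps)}$ configurations, so the LP has $n^{O(1/\eps)}$ variables and can be solved to $(1+\eps)$-optimality in the stated time by direct solution. Standard gluing constraints tie configurations to a proper assignment (each client is covered exactly once, capacities are respected), and the cardinality constraint becomes $\sum_i \sum_{\mathrm{conf}} t_{\mathrm{conf}}\, z_{i,\mathrm{conf}} \leq k$. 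I would also argue that any assignment of cost $\opt$ induces a feasible configuration-LP solution of cost $O(\opt)$ (possibly after a preliminary filtering step of constant-factor loss), so $\LP \leq O(\opt)$.

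The bulk of the work is the rounding. I would proceed in several stages. First, apply a Charikar/Jain--Vazirani-style clustering that groups clients into buckets whose diameters are controlled by their fractional connection cost, at each of the $L$ scales. Second, on each scale, bundle the facilities that are fractionally serving a bucket into one or a few \emph{configuration bundles}; the configuration LP guarantees that, inside a bundle, the clients it claims to serve are tightly concentrated in space, so one may close all but $O(1)$ facilities per bundle at only a constant-factor cost per scale. Third, set aside an additive $\eps k$ budget of ``slack'' openings to absorb the rounding error at the bucket level; this is what buys the $(1+\eps)k$ bound on the total number of open facilities. Finally, charge the rounding loss across the $L = O(\log 1/\eps)$ scales by a geometric-series telescoping, multiplied by an $O(1/\eps^2)$ factor from the bucket-level overhead (one $1/\eps$ for the slack distribution and one for the bundle concentration radius), yielding the $O\bigl(\eps^{-2}\log \eps^{-1}\bigr)$ approximation.

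The hard part will be the bundling and charging step: showing that the configuration LP provides enough local structure to close all but $O(1)$ facilities per bundle at a bounded cost, while using only $\eps k$ extra openings in total. The natural LP provably cannot support this conclusion, since its gap persists with up to $(2-\eps)k$ open facilities, so this is the step where the configuration constraints must be used in an essential way — one must exhibit, for each bundle, an integral opening pattern certified by the configuration variables and bound the reassignment cost of its clients in terms of the LP's per-scale contribution. Once that is in place, the overall ratio and the $(1+\eps)k$ guarantee follow from the scale-by-scale accounting above.
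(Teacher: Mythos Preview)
Your configuration LP is the wrong object, and this is a genuine gap rather than a stylistic difference. You attach a configuration to each \emph{individual} facility $i$ (number of copies, service profile at $O(\log 1/\eps)$ distance scales). But look again at the gap instance: each of the two facilities in an isolated group has a perfectly benign local configuration---``half open, serving about $u/2$ co-located clients''---and nothing in a per-facility description prevents both from doing this simultaneously. The unbounded gap is driven by the \emph{joint} constraint that the group needs $\lceil 1+1/u\rceil = 2$ integral openings while the LP pays for only $1+1/u$; a variable that records $i$'s own service profile contains no information about what its neighbor is doing. Your ``gluing constraints'' are linear coverage/capacity constraints, and the standard fractional solution satisfies them, so your LP is no stronger than the basic one on the gap instance. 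Consequently the step you flag as hard---``closing all but $O(1)$ facilities per bundle at bounded cost, certified by the configuration variables''---cannot go through: the LP simply does not certify it.

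The paper's LP is structurally different. Its configuration variables are indexed by a \emph{set} $B\subseteq F$ and a candidate open subset $S\subseteq B$ with $|S|\le \ell_1=\Theta(1/\eps)$ (plus a symbol $\bot$ for ``more than $\ell_1$ open''). This directly encodes the joint opening pattern inside a group, so when $z^B_\bot=0$ the projection onto $B$ is an actual convex combination of integral open sets, and one can sample an integral $S$ together with a matching of clients to $S$. Because there is such a block of constraints for every $B$, the LP has exponentially many constraints and cannot be solved directly; instead the rounding algorithm doubles as a separation oracle inside the ellipsoid method. The rounding itself is also quite different from your multi-scale bucketing: it builds the minimum spanning tree on client representatives, colors edges to extract small ``black components'', groups them via a tree decomposition so that each group has fractional weight $\Theta(1/\eps)$, and then distinguishes \emph{concentrated} groups (handled by sampling from the configuration LP: pre-open $S$ and pre-assign clients) from non-concentrated ones (handled by a local demand-moving argument charged to $D$ and $D'$). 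The $\log(1/\eps)$ in the ratio comes from a bucketing of $|S|$-values inside the concentrated-set lemma, not from $O(\log 1/\eps)$ metric scales; your claim that $L=O(\log 1/\eps)$ geometric radii suffice is separately suspect, since without preprocessing the number of relevant scales depends on the aspect ratio of the metric, not on $\eps$.
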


As most previous results are based on the basic LP relaxation for the problem, which has unbounded integrality gap even if we can open $(2-\eps)k$ facilities, our result gives the first constant approximation algorithm for \CKM which only violates the cardinality constraint by $1+\eps$ factor.  Compared to the result of \cite{Li15}, our algorithm works for general capacities and gives an exponentially better approximation ratio. On the downside, the running time of our algorithm is $n^{O(1/\eps)}$, instead of $n^{O(1)}$. It only works for soft-capacitated $k$-median: we may open $2$ copies of the same facility due to some technicality.

\paragraph{Overview} Our algorithm is based on a novel configuration LP; we highlight the key ideas here.  The following bad situation is a barrier for the basic LP relaxation. There is an isolated group of facilities and clients in the metric. The fractional solution uses $y\geq 1$ fractional open facilities to serve all clients in the group. However, the integral solution needs to use at least $\ceil{y}$ facilities. $\ceil{y}/y$ can be as large as $2-\eps$.   To bound the number of open facilities by $(1+\eps)k$, we need to handle the bad case when $y \leq \ell_1$ for some large enough $\ell_1 = \Theta(1/\eps)$. For the uniform capacitated case, this was handled by the ``rectangle LP'' in \cite{Li15}. However, the LP heavily depends on the uniformity of the capacities and generalizing it to the non-uniform case seems very hard.

Instead, we use a more complicated configuration LP. Let $B$ be the set of facilities in the group. If we know that the optimum solution opens at most $\ell_1$ facilities in $B$, we can afford to use $n^{O(\ell_1)}$ configuration variables and natural constraints to characterize the convex hull of all valid integral solutions w.r.t $B$. For example, for any $S \subseteq B$ of size at most $\ell$, we can use $z^B_{S}$ to indicate the event that the set of open facilities in $B$ is exactly $S$.

But we do not know if at most $\ell_1$ facilities are open in $B$. To overcome this issue, we also have a variable $z^B_\bot$ to indicate the event that number of open facilities in $B$ is more than $\ell_1$.  When we condition on $z^B_\bot = 0$,  we obtain a convex hull of integral solutions w.r.t $B$. If we condition on $z^B_\bot = 1$, we only get a fractional solution w.r.t $B$ to the basic LP. However, this is good enough, as the bad situation corresponds to $z^B_\bot = 0$.

We have constraints for every subset $B \subseteq F$ in our configuration LP.  As a result, the LP is hard to solve. We use a standard trick that has been used in many previous papers (e.g. \cite{ASS14, Car00, Li15}).  Given a fractional solution, our rounding algorithm either finds a good integral solution, or finds a violated constraint. This allows us to combine our rounding algorithm with the ellipsoid method.

%In order to apply the standard trick, the LP can only have a polynomial number of variables. Since we have $z$ variables for every $B \subseteq F$, our LP contains exponential number of variables. However, our LP contains only a few ``core'' variables ($x$ and $y$ variables); $z$ variables are auxiliary. For a fixed $B \subseteq F$, there is an polytope related to $B$ for the $x$ and $y$ variables.  This polytope has small ``extension complexity''(see e.g. \cite{FMP12, Yan91}): there is a higher dimension polytope $Q$ with small number of facets such that $P$ is a linear projection of $Q$.  Our polytopes for $x, y$ variables have small extension complexity and $z$ variables are used to lift them to higher dimension. Indeed, the exact facets of these polytopes may be very complicated; the LP without $z$ variables, which is the LP our algorithm directly works on, contains infinite number of constraints obtained by considering the dual of the higher dimension polytopes.  Indeed, the multi-commodity-flow polytopes considered in \cite{ASS14} have similar flavor. They have exponential number of facets; by introducing new variables, the number of facets can be reduced to polynomial. It is the fact that these polytopes are so well-studied that one usually treat them as ``easy'' polytopes.  Similarly, the polytopes in \cite{Li15} also have exponential number of facets, but one can easily identify all the facets and give a separation oracle without introducing new variables.

Our rounding algorithm uses the framework of \cite{Li15}. We create a clustering of facilities, each cluster with a center that is called a representative. Then we partition the set of representatives into groups, and bound the number of open facilities and the connection cost via a group-by-group analysis. The improved approximation ratio comes from a novel partitioning algorithm, where we partition the representatives based on the minimum spanning tree over these representatives.

% To sum up, our rounding algorithm directly works on a small set of core variables. For each subset $B \subseteq F$, we restrict the vector of core variables to be in an implicit polytope depending on $B$.  The polytopes have small extension complexity, and the separation oracles for them are constructed by considering their low-complexity extension to higher dimensions.   We hope the idea of using ``local polytopes'' of small extension complexity to strength the LP can find more applications.

\paragraph{Organization} In Section~\ref{sec:config-LP}, we introduce the basic and the configuration LP relaxations for \CKM. Then in Section~\ref{sec:basic-rounding}, we give an $(O(1), O(1))$-approximation for \CKM based on the basic LP. The result is not new and the constants are worse than those in \cite{ABG14}.  However, the algorithm serves as a starting line for our $\Big(1 + \eps, O\big(\frac1{\epsilon^2}\log\frac1\eps\big)\Big)$-approximation algorithm.  Then, in Section~\ref{sec:rounding}, we give the rounding algorithm based on the configuration LP. We end this paper with some open problems in Section~\ref{sec:discussion}. All omitted proofs can be found in Appendix~\ref{appendix:proofs}.
%%%%%%%%%%%%%%%%%%%%%%%%%%%%%%%%%%%%%%%%%%%%%%%%%%%%%%%%%%%%%%%%%%%%%%%%%%%%%%%%%%%%%%%%%
\section{The basic LP and the configuration LP}%%%%%%%%%%%%%%%%%%%%%%%%%%%%%%%%%%%%%%%%%%
\label{sec:config-LP}%%%%%%%%%%%%%%%%%%%%%%%%%%%%%%%%%%%%%%%%%%%%%%%%%%%%%%%%%%%%%%%%%%%%
%%%%%%%%%%%%%%%%%%%%%%%%%%%%%%%%%%%%%%%%%%%%%%%%%%%%%%%%%%%%%%%%%%%%%%%%%%%%%%%%%%%%%%%%%

In this section, we give our configuration LP for \CKM. We start with the following basic LP relaxation:
\begin{equation}
\textstyle \min \qquad \sum_{i \in F, j\in C}d(i,j)x_{i,j} \qquad \text{s.t.} \tag{Basic LP} \label{LP:basic}
\end{equation}
\vspace*{-18pt}

\begin{minipage}{0.4\textwidth}
\begin{alignat}{2}\setlength{\abovedisplayskip}{0pt}\setlength{\belowdisplayskip}{0pt}
\textstyle \sum_{i \in F} y_i &\leq  k;  \label{LPC:k-facilities} \\
\textstyle  \sum_{i \in F}x_{i,j} &=1, &\qquad &\forall j \in C; \label{LPC:client-must-connect} \\
\textstyle  x_{i,j} &\leq y_i, &\qquad &\forall i \in F, j \in C; \label{LPC:connect-to-open}
\end{alignat}
\end{minipage}
\begin{minipage}{0.55\textwidth}
\begin{alignat}{2}\setlength{\abovedisplayskip}{0pt}\setlength{\belowdisplayskip}{0pt}
\textstyle  \sum_{j \in C}x_{i,j} &\leq u_iy_i, &\qquad &\forall i \in F; \label{LPC:capacity} \\
\textstyle  0 \leq x_{i,j}, y_i &\leq 1, &\qquad &\forall i \in F, j \in C. \label{LPC:xy-non-neg} \\[8pt]
 \nonumber
\end{alignat}
\end{minipage}
\vspace*{1pt}

In the above LP, $y_i$ indicates whether facility $i$ is open or not, and $x_{i, j}$ indicates whether the client $j$ is connected to facility $i$.  Constraint~(\ref{LPC:k-facilities}) restricts us to open at most $k$ facilities, Constraint~(\ref{LPC:client-must-connect}) requires every client to be connected to a facility, Constraint~(\ref{LPC:connect-to-open}) says that a client can only be connected to an open facility and Constraint~(\ref{LPC:capacity}) is the capacity constraint. In the integer programming, we require $x_{i, j}, y_i \in \set{0, 1}$ for every $i \in F, j \in C$. In the LP relaxation, we relax the constraint to $x_{i, j}, y_i \in [0, 1]$.  In the soft-capacitated case, $y_i$ may be any non-negative integer. We can make $k$ copies of each facility $i$ and assume the instance is hard-capacitated.  Thus we assume $y_i \in \set{0,1}$ in a valid integral solution.  %The only advantage with this assumption is that we can avoid use multi-sets.

The above LP has unbounded integrality gap, even when all facilities have the same capacity $u$ and we are allowed to open $(2-\eps)k$ facilities.  In the gap instance we have $u$ separate groups, each containing $u+1$ clients and $2$ facilities, and we are allowed to open $k = u + 1$ facilities. The fractional solution can open $1+1/u$ facilities in each group and use them to serve the $(1+1/u)u = u+1$ clients in the group. (This can be achieved by, for example, setting $y_i = 1/2 + 1/(2u)$ for each of the two facilities $i$ and $x_{i,j} = 1/2$ every facility $i$ and client $j$ in the group.)  In an integral solution, there is a group in which at most one facility is open, even if we are allowed to open $2u - 1 = 2k-3$ facilities. Some client in this group must be connected to a facility outside the group. Thus, if the the groups are far away from each other, the integrality gap is unbounded. 

The gap instance suggested the following bad situation.  There is an isolate group $B$ of facilities, with some nearby clients.  The fractional solution uses $y_B$ open facilities to serve the nearby clients; however serving these clients integrally requires at least $\ceil{y_B}$ facilities. So if $y_B$ is a small integer plus a tiny fractional number, then $\ceil{y_B}/y_B > 1+\epsilon$.  So, we can not afford to open $\ceil{y_B}$ facilities inside $B$.

This motivates the following idea to handle the isolated group $B$.  Let $\ell_1 = \Theta(1/\eps)$ be large enough. If $y_B \geq \ell_1$, then we will have $\ceil{y_B}/y_B \leq 1+\epsilon$. So, the bad case happens only if  $y_B \leq \ell_1$.  If we know that the optimum solution opens at most $\ell_1$ facilities in $B$, then we can afford to have configuration variables such as $z^B_{S, i,j}$, for every $S\subseteq B, |S| \leq \ell_1, i \in S, j \in C$, to indicate the event that the set of open facilities in $B$ is exactly $S$, and $j$ is connected to $i$.  With some natural constraints on these variables, we can overcome the gap instance. Since we do not know if the optimum solution opens at most $\ell_1$ facilities in $B$,  we shall allow $S$ to take value ``$\bot$'', which means that $S$ has size more than $\ell_1$. In this case, we do not care what is the set exactly; knowing that the size is more than $\ell_1$ is enough. 

\smallskip
Now we formally define the configuration LP.  Let us fix a set $B\subseteq F$ of facilities.  Let $\calS = \set{S \subseteq B: |S| \leq \ell_1}$ and $\tcalS = \calS \cup \set{\bot}$, where $\bot$ stands for ``any subset of $B$ with size more than $\ell_1$''; for convenience, we also treat $\bot$ as a set such that $i \in \bot$ holds for every $i \in B$.   For $S \in \calS$, let $z^B_S$ indicate the event that the set of open facilities in $B$ is exactly $S$ and $z^B_\bot$ indicate the event that the number of open facilities in $B$ is more than $\ell_1$. 

For every $S \in \tcalS$ and $i \in S$, $z^B_{S, i}$ indicates the event that $z^B_S = 1$ and $i$ is open. (If $i \in B$ but $i \notin S$, then the event will not happen.)  Notice that when $i \in S \neq \bot$, we always have $z^B_{S, i} = z^B_S$; we keep both variables to simplify the description of the configuration LP.  For every $S \in \tcalS, i \in S$ and client $j \in C$,  $z^B_{S, i, j}$ indicates the event that $z^B_{S, i} = 1$ and $j$ is connected to $i$.  In an integral solution, all the above variables are $\set{0,1}$ variables.  The following constraints are valid. To help understand the constraints, it is good to think of $z^B_{S, i}$ as $z^B_S \cdot y_i$ and $z^B_{S, i, j}$ as $z^B_S \cdot x_{i, j}$.
\vspace*{-15pt}

\noindent\begin{minipage}{0.52\textwidth}
\begin{align}
\sum_{S \in \tcalS }z^B_S &= 1; \label{CLPC:add-to-one} \\
\sum_{S \in \tcalS:i \in S}z^B_{S, i} &= y_i, & \forall& i \in B; \label{CLPC:add-to-y} \\
\sum_{S \in \tcalS : i \in S}z^B_{S, i, j} &= x_{i,j}, & \forall& i \in B, j \in C; \label{CLPC:add-to-x} \\
0 \leq z^B_{S, i, j} \leq z^B_{S, i} &\leq z^B_S, & \forall & S \in \tcalS, i \in S, j \in C; \label{CLPC:non-negative}
\end{align}
\end{minipage}
\begin{minipage}{0.48\textwidth}
\vspace*{20pt}
\begin{align}
z^B_{S, i}  &= z^B_S, &\forall& S \in \calS, i \in S; \label{CLPC:i-irrelevant} \\
\sum_{i \in S}z^B_{S, i, j} &\leq z^B_S, &\forall& S \in \tcalS, j \in C; \label{CLPC:j-connection-bound}\\
\sum_{j \in C}z^B_{S, i, j} &\leq u_i z^B_{S, i}, & \forall & S \in \tcalS, i \in S; \label{CLPC:capacity} \\
\sum_{i \in B}z^B_{\bot, i} &\geq \ell_1 z^B_\bot. \label{CLPC:more-than-ell-facilities} \\
\nonumber
\end{align}
\end{minipage}
\vspace*{-12pt}

Constraint~\eqref{CLPC:add-to-one} says that $z^B_S = 1$ for exactly one $S \in \tcalS$. Constraint~\eqref{CLPC:add-to-y} says that if $i$ is open then there is exactly one $S \in \tcalS$ such that $z^B_{S, i} = 1$.  Constraint~\eqref{CLPC:add-to-x} says that if $j$ is connected to $i$ then there is exactly one $S \in \tcalS$ such that $z^B_{S, i, j} = 1$. Constraint~\eqref{CLPC:non-negative} is by the definition of the variables. Constraint~\eqref{CLPC:i-irrelevant} holds as we mentioned earlier.  Constraint~\eqref{CLPC:j-connection-bound} says that if $z^B_S = 1$ then $j$ can be connected to at most 1 facility in $S$. Constraint~\eqref{CLPC:capacity} is the capacity constraint. Finally, Constraint~\eqref{CLPC:more-than-ell-facilities} says that if $z^B_{\bot} = 1$, then at least $\ell_1$ facilities in $B$ are open.   

%Indeed, focus on any fractional point in the polytope. We can first select a set $S$ according to $\set{z^B_S}_{S \in \calS}$ and let $S$ be the set of open facilities in $B$. Once $S$ is fixed, the variables $\set{z^B_{S,i, j}/z^B_S}_{i \in S, j \in C}$ defines a fractional matching between $S$ and $C$, where each $i \in S$ is matched at most $u_i$ times and each $j \in C$ is matched at most once.  By the integrality of matching, we can connect the clients according to a randomly selected integral matching.

If $z^B_\bot = 0$, the above polytope is integral;  this can be seen from the integrality of matching polytopes. This is not true if $z^B_\bot > 0$.  The variables $\set{z^B_{\bot, i}/z^B_\bot}_{i \in B}$ and $\set{z^B_{\bot, i, j}/z^B_\bot}_{i \in B, j \in C}$ only define a fractional solution to the basic LP for the instance defined by $B$ and $C$ (not all clients in $C$ need to be connected). This is sufficient as the bad case happens only if a few facilities are open in $B$. 

Our configuration LP is obtained from the basic LP, by adding the $z$ variables and Constraints~\eqref{CLPC:add-to-one} to~\eqref{CLPC:more-than-ell-facilities} for every $B \subseteq F$.  Fixing a set $B \subseteq F$ and $(x, y)$, we can check in time $n^{O(1/\eps)}$ if there are $z$ variables satisfying Constraints~\eqref{CLPC:add-to-one} to~\eqref{CLPC:more-than-ell-facilities}, since the total number of variables and constraints is $n^{O(\ell_1) = O(1/\eps)}$.   As there are exponential number of sets $B$, we do not know how to solve this LP.  

Instead, we transform the above configuration LP so that it contains only $x$ and $y$ variables.  The new LP will have infinite number of constraints. Fix a subset $B \subseteq F$. Constraints~\eqref{CLPC:add-to-one} to~\eqref{CLPC:more-than-ell-facilities} can be written as $Mz\succeq b + M'x + M''y$, where $M, M', M''$ are some matrices, $x, y, z$ are the column vectors containing all $x, y, z$ variables respectively, and $b$ is a column vector\footnote{We break an equality into two inequalities.}. By the duality of linear programming, the set of constraints is feasible for a fixed $(x, y)$, iff for every vector $g$ such that $g^TM = 0$, we have $g^T(b + M'x + M''y) \leq 0$. Thus, we can convert Constraints~\eqref{CLPC:add-to-one} to~\eqref{CLPC:more-than-ell-facilities} in the following way: for every $g$ such that $g^TM = 0$, we have the constraint $g^T(b + M'x + M''y) \leq 0$. All these constraints are linear in $x$ and $y$ variables.  Given a fixed $(x, y)$ for which the system defined by constraints~\eqref{CLPC:add-to-one} to~\eqref{CLPC:more-than-ell-facilities} is infeasible, we can find a vector $g$  such that $g^TM = 0$ and $g^T(b + M'x + M''y) > 0$.

Thus, our final configuration LP contains only $x, y$ variables, but infinite number of constraints\footnote{Since the polytope defined by $x, y, z$ variables has finite number of facets, so does its projection to $x, y$ coordinates. Thus, only a finite number of constraints matter. However, it is hard to define this finite set; and it is not important.}.  We can apply the standard trick, which has been used in, e.g, \cite{ASS14} and \cite{Li15}. Given a fractional solution $(x, y)$ to the basic LP relaxation, our rounding algorithm either constructs an integral solution with the desired properties, or outputs a set $B \subseteq F$ such that Constraints~\eqref{CLPC:add-to-one} to~\eqref{CLPC:more-than-ell-facilities} are infeasible.  In the latter case, we can find a violated constraint. This allows us to run the ellipsoid method.

%%%%%%%%%%%%%%%%%%%%%%%%%%%%%%%%%%%%%%%%%%%%%%%%%%%%%%%%%%%%%%%%%%%%%%%%%%%%%%%%%%%%%%
\section{An $(O(1), O(1))$-approximation based on the basic LP relaxation}%%%%%%%%%%%%
\label{sec:basic-rounding}%%%%%%%%%%%%%%%%%%%%%%%%%%%%%%%%%%%%%%%%%%%%%%%%%%%%%%%%%%%%
%%%%%%%%%%%%%%%%%%%%%%%%%%%%%%%%%%%%%%%%%%%%%%%%%%%%%%%%%%%%%%%%%%%%%%%%%%%%%%%%%%%%%%

In this section,  we describe an $(O(1), O(1))$-approximation for \CKM based on the basic LP relaxation.  This result is not new and our constants are worse than those in \cite{ABG14}. The purpose of section is only to set up a starting line for our $\Big(1+\eps, O\big(\frac{1}{\eps^2}\log\frac1\eps\big)\Big)$-approximation; most of the components in this section will be used in our new algorithm. 

After the set of open facilities is decided, the optimum connection assignment from clients to facilities can be computed by solving a minimum cost $b$-matching instance. Due to the integrality of the matching polytope, we may allow the connections to be fractional. That is, if there is a good fractional assignment, then there is a good integral assignment. So we can use the following framework.  Initially there is one unit of demand at each client $j \in C$. During our algorithm, we move demands fractionally inside $F \cup C$, incurring a cost of $xd(i, j)$ for moving $x$ units of demand from $i$ to $j$. At the end, all the demands are moved to $F$. If a facility $i \in F$ has $\alpha_i$ units of demand, we open $\ceil{\alpha_i/u_i}$ copies of $i$ to satisfy the demand. Our goal is to bound the total moving cost and the number of open facilities.  

A standard approach to facility location problems is to partition the facilities into many clusters. Each cluster contains a set of nearby facilities and the fractional number of open facilities in each cluster is not too small.  Each cluster is associated with a center $v \in C$ which we call \emph{client representatives}.

Focus on a fractional solution $(x, y)$ to the basic LP. Let $\dav(j) = \sum_{i \in F}x_{i, j}d(i, j)$ be the connection cost of $j$, for every client $j \in C$. Then the value of the solution $(x, y)$ is $\LP := \sum_{i \in F, j \in C}x_{i,j}d(i,j)=\sum_{j \in C}\dav(j)$.   For any set $F' \subseteq F$ of facilities and any set $C' \subseteq C$ of clients, we shall let $x_{F', C'} := \sum_{i \in F', j \in C'} x_{i, j}$; we simply write $x_{i, C'}$ for $x_{\set{i}, C'}$ and  $x_{F', j}$ for $x_{F', \set{j}}$. For any $F' \subseteq F$, let $y_{F'} := y(F') := \sum_{i \in F'}y_i$.

We shall use $R$ to denote the set of client representatives. Let $R = \emptyset$ initially. Repeat the following process until $C$ becomes empty. We select the client $v \in C$ with the smallest $\dav(v)$ and add it to $R$. We remove all clients $j$ such that $d(j, v) \leq 4 \dav(j)$ from $C$ (thus, $v$ itself is removed). We shall use $v$ and its derivatives to index representatives, and $j$ and its derivatives to index general clients. 

We partition the set $F$ of locations according to their nearest representatives in $R$. Let $U_v = \emptyset$ for every $v \in R$ initially. For each location $i \in F$, we add $i$ to $U_v$ for the $v \in R$ that is closest to $i$. Thus, $\set{U_v:v \in R}$ forms a Voronoi diagram of $F$ with $R$ being the centers. For any subset $V \subseteq R$ of representatives, we use $U_V := U(V) := \union_{v \in V} U_v$ to denote the union of Voronoi regions with centers $V$.

\begin{claim} \label{claim:representatives}
The following statements hold:
\begin{properties}{claim:representatives}
\item for all $v, v' \in R, v \neq v'$, we have $d(v, v') > 4 \max\set{\dav(v), \dav(v')}$; \label{property:representatives-far-away}
\item for all $j \in C$, there exists $v \in R$, such that $\dav(v) \leq \dav(j)$ and $d(v, j) \leq 4\dav(j)$; \label{property:near-a-representative}
\item $y(U_v) \geq 1/2$ for every $v \in R$; \label{property:bundle-large} 
\item for any $v \in R$, $i \in U_v$ and $j \in C$, we have $d(i, v) \leq d(i, j) + 4\dav(j)$. \label{property:facility-to-representative}
\end{properties}
\end{claim}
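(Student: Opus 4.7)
I would prove the four parts in order, since the later ones use the earlier ones.

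For (a), I would use the order in which representatives are added to $R$. Take $v, v' \in R$ with $v$ added before $v'$. At the moment $v$ is picked, $v'$ is still in $C$ (it will only be added later), so the greedy rule $v = \arg\min \dav$ gives $\dav(v) \le \dav(v')$. The fact that $v'$ was not removed when $v$ was added means $d(v', v) > 4\dav(v')$. Combining, $d(v, v') > 4\dav(v') = 4\max\{\dav(v), \dav(v')\}$.

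For (b), fix $j \in C$ (the original client set). If $j$ itself is chosen for $R$, take $v = j$ and both inequalities are trivial. Otherwise $j$ is removed from $C$ at some step when some $v \in R$ is added, with $d(v, j) \le 4\dav(j)$ by the removal rule; and the greedy choice of $v$ as the minimizer of $\dav$ over the $C$ remaining at that moment (which contains $j$) gives $\dav(v) \le \dav(j)$.

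For (c), the idea is a Markov bound combined with (a). Let $F_v := \{i \in F : d(i, v) \le 2 \dav(v)\}$. Splitting $\sum_i x_{i,v} d(i,v) = \dav(v)$ gives $\sum_{i \notin F_v} x_{i,v} \le 1/2$, hence $\sum_{i \in F_v} x_{i,v} \ge 1/2$, and by \eqref{LPC:connect-to-open} also $y(F_v) \ge 1/2$. Now I would show $F_v \subseteq U_v$: if $i \in F_v$ and $v' \in R \setminus \{v\}$, then using (a),
\[
d(i, v') \ge d(v, v') - d(v, i) > 4\dav(v) - 2\dav(v) = 2\dav(v) \ge d(i, v),
\]
so $v$ is strictly the closest representative to $i$ and thus $i \in U_v$. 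The bound $y(U_v) \ge y(F_v) \ge 1/2$ follows.

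For (d), I would chain (b) with the Voronoi definition. By (b), pick $v' \in R$ with $d(v', j) \le 4\dav(j)$. Since $i \in U_v$, $v$ is a closest representative to $i$, so $d(i, v) \le d(i, v')$. Then $d(i, v) \le d(i, v') \le d(i, j) + d(j, v') \le d(i, j) + 4\dav(j)$ by the triangle inequality.

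The only step that is not a one-line unpacking of definitions is (c), where one has to combine the Markov-type averaging with the separation property (a) to actually conclude that the near facilities of $v$ lie in $v$'s Voronoi cell; everything else is bookkeeping around the two greedy rules used to build $R$.
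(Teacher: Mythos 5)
Your proof is correct and follows essentially the same route as the paper's: (a) and (b) from the greedy/removal rules, (c) by a Markov-type averaging over the ball of radius $2\dav(v)$ combined with (a) to place that ball inside $U_v$, and (d) by chaining (b) with the Voronoi property. The only difference is cosmetic---you spell out the inclusion $F_v\subseteq U_v$ via the triangle inequality, which the paper states without detail.
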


We move all demands to client representatives. First, for each client $j \in C$ and $i \in F$, we move $x_{i,j}$ units of demand from $j$ to $i$. Obviously, the moving cost is exactly $\LP$.   After this step, each $i \in F$ has $\sum_{j \in C}x_{i,j} = x_{i,C} \leq u_iy_i$ units of demand.   Second, for each $v \in R$ and $i \in U_v$, we move the $x_{i,C}$ units of demand at $i$ to $v$. Then, all demands are moved to $R$ and a representative $v \in R$ has $\sum_{i \in U_v} x_{i,C} = x_{U_v, C}$ units of demand.  Corollary~\ref{cor:moving-to-representatives} bounds the moving cost for the second step.

\begin{definition}
Let $D_i := \sum_{j \in C}x_{i,j}d(i,j)$ and $D'_i := \sum_{j \in C}x_{i,j}\dav(j)$ for every $i \in F$. Let $D_S := D(S) := \sum_{i \in S}D_i$ and $D'_S := D'(S) := \sum_{i \in S}D'_i$ for every $S \subseteq F$.  Obviously $D_F = D'_F = \LP$.
\end{definition}
Thus, we can think of $D_i$ (similarly, $D'_i$) as a distribution of the total cost $\LP$ to the facilities. We shall use $D$ and $D'$ to charge the cost of integral solutions. 

\begin{lemma} 
\label{lemma:moving-to-representatives}
For every $v \in R$, we have
$\sum_{i \in U_v}x_{i,C}d(i,v) \leq D(U_v) + 4D'(U_v)$.
\end{lemma}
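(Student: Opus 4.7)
The plan is to unfold $x_{i,C} = \sum_{j \in C} x_{i,j}$ in the left-hand side, then bound each factor $d(i,v)$ using property~\ref{property:facility-to-representative} of Claim~\ref{claim:representatives}, which gives $d(i,v) \leq d(i,j) + 4\dav(j)$ for every $i \in U_v$ and every $j \in C$. After this substitution, the resulting double sum splits linearly into two pieces, which I expect to identify with $D(U_v)$ and $4D'(U_v)$ respectively, exactly matching the statement.

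Concretely, I would write
\[
\sum_{i \in U_v} x_{i,C}\, d(i,v) \;=\; \sum_{i \in U_v}\sum_{j \in C} x_{i,j}\, d(i,v) \;\leq\; \sum_{i \in U_v}\sum_{j \in C} x_{i,j}\bigl(d(i,j) + 4\dav(j)\bigr),
\]
and then split the right-hand side into
\[
\sum_{i \in U_v}\sum_{j \in C} x_{i,j}\, d(i,j) \;+\; 4 \sum_{i \in U_v}\sum_{j \in C} x_{i,j}\, \dav(j) \;=\; \sum_{i \in U_v} D_i \;+\; 4 \sum_{i \in U_v} D'_i \;=\; D(U_v) + 4 D'(U_v),
\]
using only the definitions of $D_i$ and $D'_i$.

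I do not expect any real obstacle here: the lemma is essentially a one-line consequence of property~\ref{property:facility-to-representative}, which itself is a triangle-inequality fact about the Voronoi partition together with the fact that the closest representative of a client $j$ lies within $4\dav(j)$ of $j$. The only thing to be careful about is swapping the order of summation so that the per-facility quantities $D_i = \sum_{j} x_{i,j} d(i,j)$ and $D'_i = \sum_{j} x_{i,j}\dav(j)$ appear cleanly before aggregating over $i \in U_v$.
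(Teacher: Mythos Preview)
Your proposal is correct and essentially identical to the paper's own proof: both expand $x_{i,C}$ as $\sum_{j\in C} x_{i,j}$, apply Property~\ref{property:facility-to-representative} to bound $d(i,v)$ by $d(i,j)+4\dav(j)$, and then recognize the resulting sums as $D(U_v)$ and $4D'(U_v)$.
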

\begin{proof}
By Property~\ref{property:facility-to-representative}, we have $d(i, v) \leq d(i, j) + 4\dav(j)$ for every $i \in U_v$ and $j \in C$.  Thus, 
{
\begin{align*}
\textstyle \sum_{i \in U_v}x_{i,C} d(i,v) \leq \sum_{i \in U_v, j \in C}x_{i,j} \big(d(i, j) + 4\dav(j)\big) = \sum_{i \in U_v}(D_i + 4D'_i) = D(U_v) + 4D'(U_v). &\qedhere
\end{align*}
}
\end{proof}

Adding the above inequality for all $v \in R$ gives the following corollary.
\begin{corollary}
\label{cor:moving-to-representatives}
$\sum_{v \in R, i \in U_v}x_{i,C} d(i, v) \leq D_F + 4D'_F = 5\LP$.
\end{corollary}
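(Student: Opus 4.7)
The plan is essentially routine: the corollary is obtained by summing the per-representative bound from Lemma~\ref{lemma:moving-to-representatives} over all $v \in R$. Concretely, I would start by writing
\[
\sum_{v \in R}\sum_{i \in U_v}x_{i,C}\,d(i,v) \;\leq\; \sum_{v \in R}\bigl(D(U_v) + 4D'(U_v)\bigr).
\]

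Next, I would invoke the key structural fact that $\{U_v : v \in R\}$ is a Voronoi partition of $F$ (as defined in the paragraph introducing $U_v$). Because every $i \in F$ lies in exactly one $U_v$, the sums $\sum_{v \in R} D(U_v)$ and $\sum_{v \in R} D'(U_v)$ are just reorderings of $\sum_{i \in F} D_i$ and $\sum_{i \in F} D'_i$, i.e.\ they equal $D_F$ and $D'_F$ respectively. By the definition given just before Lemma~\ref{lemma:moving-to-representatives}, both of these equal $\LP$.

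Plugging in gives $D_F + 4D'_F = \LP + 4\LP = 5\LP$, which is the stated bound. There is no real obstacle here: the only subtle point is making sure one really has a partition (not just a cover), but this is guaranteed by breaking ties arbitrarily when assigning each $i \in F$ to its closest representative, as already done in the construction of the $U_v$'s.
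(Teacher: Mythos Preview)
Your proposal is correct and matches the paper's approach exactly: the paper states that the corollary follows by ``adding the above inequality for all $v \in R$,'' which is precisely what you do, using that $\{U_v\}_{v\in R}$ partitions $F$ and that $D_F = D'_F = \LP$.
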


So, moving demands from $F$ to their respective centers incurs a cost of at most $5\LP$.  Focus on a representative $v \in R$; it has $x_{U_v, C}$ units of demand after the moving. For each $i \in U_v$, we shall move $\alpha_i$ units of demand from $v$ to $i$. Consider the following LP with variables $\set{\alpha_i}_{i \in U_v}$: minimize $\sum_{i \in U_v} \alpha_i d(i,v)$ subject to $\sum_{i \in U_v} \alpha_i / u_i \leq y(U_v), \sum_{i \in U_v} \alpha_i = x_{U_v, C}$ and $\alpha_i \in [0, u_i]$ for every $i \in U_v$.  The value of the LP is at most $\sum_{i \in U_v} x_{i,C} d(i,v)$ which can be achieved by setting $\alpha_i = x_{i,C}$ for every $i \in U_v$. We select a vertex solution $\set{\alpha^*_i}_{i \in U_v}$ for the LP. Then all but at most 2 facilities $i \in U_v$ have $\alpha^*_i \in \set{0, u_i}$. Each $i \in U_v$ gets $\alpha^*_i$ units of demands from $v$; the moving cost is $\sum_{i \in U_v}\alpha^*_i d(i,v) \leq \sum_{i \in U_v} x_{i, C} d(i,v) \leq D(U_v) + 4D'(U_v)$ by Lemma~\ref{lemma:moving-to-representatives}. We open $\sum_{i \in U_v} \ceil{\alpha^*_i/u_i}\leq \big\lfloor \sum_{i \in U_v} \alpha^*_i/u_i + 2 \big\rfloor \leq \big\lfloor y(U_v)\big\rfloor + 2$ facilities in $U_v$. The total moving cost over all $v \in R$ in this step is $\sum_{v \in R}(D(U_v) + 4D'(U_v)) = 5\LP$. Since $y(U_v) \geq 1/2$ for every $v \in R$ by Property~\ref{property:bundle-large}, we have $\frac{\floor{y(U_v)}+2}{y(U_v)} \leq 4$, implying that at most $4k$ facilities are open. This gives a $(4, 11)$-approximation for \CKM. 

\section{Rounding a fractional solution to the configuration LP}%%%%%
\label{sec:rounding}

In this section, we show how to round a fractional solution to the configuration LP to obtain our $\Big(1+\epsilon, O\big(\frac{1}{\eps^2}\log\frac1\eps\big)\Big)$-approximation. To be more accurate, given a fractional solution $(x, y)$ to the basic LP, the rounding algorithm either succeeds, or finds a set $B \subseteq F$ for which Constraints~\eqref{CLPC:add-to-one} to~\eqref{CLPC:more-than-ell-facilities} is infeasible.  $\dav(j), \LP, x_{F', C'}, y_{F'} = y(F'), D$ and $D'$ are defined in the same way as in Section~\ref{sec:basic-rounding}.  We also construct the set $R$ of client representatives and the clustering $\set{U_v}_{v \in R}$ as in Section~\ref{sec:basic-rounding}.  We let $d(A, B):= \min_{i \in A, j \in B} d(i, j)$ denote the minimum distance between $A$ and $B$, for any $A, B \subseteq F \cup C$; we simply use $d(i, B)$ for $d(\set{i}, B)$.   In this section, $\ell = \Theta(1/\eps)$ is a large enough integer, whose value will be given later, $\ell_1 = 2\ell + 2$ and $\ell_2 = \Theta(\ell \log \ell)$ is large enough, given later by Lemma~\ref{lemma:dealing-with-concentrated-sets}.

We give an overview of the algorithm. The rounding algorithm in Section~\ref{sec:basic-rounding} opens $\floor{y(U_v)}+2$ facilities inside $U_v$.  If $y(U_v) \geq \Theta(1/\eps)$ for every $v \in R$, then the algorithm gives $O(1)$-approximation with $(1+\eps)k$ open facilities.  But we only have $y(U_v) \geq 1/2$. In order to save the number of opening facilities, we follow the framework of \cite{Li15} that combines the representatives to form bigger groups.  We move demands within each group. In each group, we open $2$ more facilities than the number given by the fractional solution.  To obtain the improved approximation ratio, we use a novel process to partition $R$ into groups, based on coloring the edges of the minimum spanning tree of the metric $(R, d)$.  

If there are no so-called ``concentrated sets'' in a group, then the moving cost within the group can be charged locally using the $D$ and $D'$ values. For a concentrated set $J$, we need Constraints~\eqref{CLPC:add-to-one} to~\eqref{CLPC:more-than-ell-facilities} to hold for the set $B = U_J$. We pre-open a set of facilities in $U_J$ and pre-assign a set of clients to these facilities based on the values of the $z$ variables.   After the pre-assignment, the moving cost for the remaining demands can be charged locally. We now describe the algorithm in more detail.

\newcommand{\MST}{\mathsf{MST}}

\paragraph{Partition $R$ into groups} To partition the set $R$ of representatives into groups, we run the classic Kruskal's algorithm to find the minimum spanning tree $\MST$ of the metric $(R, d)$, and then color the edges in $\MST$ using black, grey and white colors. In Kruskal's algorithm, we maintain the set $E_\MST$ of edges added to $\MST$ so far and a partition $\calJ$ of $R$ into groups.  Initially, we have $E_\MST = \emptyset$ and $\calJ = \{\{v\}:v \in R\}$. The length of an edge $e \in {R \choose 2}$ is the distance between the two endpoints of $e$. We sort all edges in $R \choose 2$ according to their lengths, breaking ties arbitrarily. For each pair $(u, v)$ in this order, if $u$ and $v$ are not in the same group in $\calJ$, we add the edge $(u, v)$ to $E_\MST$ and merge the two groups containing $u$ and $v$ respectively.

%TODO: overview of the algorithm

%TODO:  working out the constants

We now color the edges in $E_\MST$.  For every $v \in R$, we say the weight of $v$ is $y(U_v)$; so every representative $v \in R$ has weight at least $1/2$ by Property~\ref{property:bundle-large}. For a subset $J \subseteq R$ of representatives, we say $J$ is large if the weight of $J$ is at least $\ell$, i.e, $y(U_J) \geq \ell$; we say $J$ is small otherwise.  For any edge $e = (u, v)\in E_\MST$, we consider the iteration in Kruskal's algorithm in which the edge $e$ is added to $\MST$. After the iteration we merged the group $J_u$ containing $u$ and the group $J_v$ containing $v$ into a new group $J_u \cup J_v$.  If both $J_u$ and $J_v$ are small, then we call $e$ a black edge.  If $J_u$ is small and $J_v$ is big, we call $e$ a grey edge, directed from $u$ to $v$; similarly, if $J_v$ is small and $J_u$ is big, $e$ is a grey edge directed from $v$ to $u$. If both $J_u$ and $J_v$ are big, we say $e$ is a white edge. So, we treat black and white edges as undirected edges and grey edges as directed edges. 

We define a black component of $\MST$ to be a maximal set of vertices connected by black edges.   The following claim is straightforward. It follows from the fact that a black component $J \subseteq R$ appeared as a group at some iteration of Kruskal's algorithm for computing $\MST$. 

\begin{claim}
	\label{claim:black-component-in-set}
	Let $J$ be a black component of $\MST$. Then for every black edge $(u, v)$ in $J \choose 2$, we have $d(u, v) \leq d(J, R \setminus J)$.
\end{claim}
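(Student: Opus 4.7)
The plan is to follow the hint in the claim by proving that $J$ itself appears as a Kruskal-group at some iteration, and then use the non-decreasing order in which Kruskal processes edges. The claim is vacuous when $|J|=1$, so assume $|J|\ge 2$, and (by a standard perturbation argument) assume all edge lengths are distinct so that ties in Kruskal do not arise.

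First I would establish a structural observation: every $\MST$ edge with both endpoints in $J$ is black. Indeed, if some $\MST$ edge $e'$ with both endpoints in $J$ were grey or white, then $J$ being a maximal black-connected set would give a path of black $\MST$ edges between the endpoints of $e'$, and this path together with $e'$ would form a cycle in $\MST$, a contradiction. Therefore the black $\MST$ edges inside $J$ form a spanning tree $T_J$ of $J$. Let $e^{*}=(u^{*},v^{*})$ denote the longest edge of $T_J$; it suffices to prove $|e^{*}|\le d(J,R\setminus J)$, since every other black edge in $J$ has length at most $|e^{*}|$.

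The key step is to show that immediately after Kruskal processes $e^{*}$, the Kruskal-group containing $u^{*}$ and $v^{*}$ is exactly $J$. Let $G_{u^{*}}$ and $G_{v^{*}}$ denote the Kruskal-groups of $u^{*}$ and $v^{*}$ just before $e^{*}$ is processed. Since $e^{*}$ is black, both groups are small; but then every intermediate sub-group in their formation was also small, so every merge producing them was a small-small (hence black) merge. Thus each of $G_{u^{*}}$ and $G_{v^{*}}$ is connected by black $\MST$ edges, lies inside a single black component of $\MST$, and---containing $u^{*}$ or $v^{*}$---must be contained in $J$. For the reverse inclusion, remove $e^{*}$ from $T_J$ to obtain two subtrees $T_{u^{*}}\ni u^{*}$ and $T_{v^{*}}\ni v^{*}$. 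Every vertex $v\in T_{u^{*}}$ is connected to $u^{*}$ in $T_{u^{*}}$ by black $\MST$ edges strictly shorter than $|e^{*}|$, all of which have been processed and added to $\MST$ by the time $e^{*}$ is processed; hence $v\in G_{u^{*}}$, and symmetrically $T_{v^{*}}\subseteq G_{v^{*}}$. Combining gives $G_{u^{*}}\cup G_{v^{*}}=J$, so the group produced by processing $e^{*}$ is exactly $J$.

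The claim then follows by contradiction. Suppose some edge $(v,w)$ with $v\in J$ and $w\in R\setminus J$ had length strictly less than $|e^{*}|$; it would be processed at some iteration strictly before $e^{*}$. By monotonicity, $v$'s Kruskal-group at that time is contained in $v$'s Kruskal-group just before $e^{*}$ is processed, which is either $G_{u^{*}}$ or $G_{v^{*}}$ and is thus small and contained in $J$; in particular it does not contain $w$, so Kruskal would add $(v,w)$ to $\MST$. Since $(v,w)$ crosses the maximal black component $J$, it cannot be black, so at that time the Kruskal-group of $w$ would have to be big. But then immediately after processing $(v,w)$, $v$'s group would absorb this big group and become big, so by monotonicity $v$'s group just before $e^{*}$ is processed would still be big---contradicting that it is contained in the small group $G_{u^{*}}$ or $G_{v^{*}}$. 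Hence $|e^{*}|\le d(J,R\setminus J)$, as required.

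The main obstacle is the middle paragraph. Both directions of ``$G_{u^{*}}\cup G_{v^{*}}=J$'' depend delicately on how Kruskal builds the small groups: the $\subseteq J$ direction exploits that small groups are built entirely by black merges and hence live inside a single black component, while the $\supseteq J$ direction needs every black edge of $T_J$ other than $e^{*}$ to have been added to $\MST$ by the time $e^{*}$ is processed, which is where the distinct-lengths assumption is used.
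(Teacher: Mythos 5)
Your proof is correct and rests on the same core facts as the paper's: Kruskal's merges that form a small group must all be small--small (hence black), so the group is confined to a single black component, and an early edge leaving $J$ would make the relevant group big and block later black edges inside $J$. The main organizational difference is that you explicitly establish that $J$ appears as a Kruskal group via the $G_{u^*}\cup G_{v^*}=J$ argument (which the paper mentions only as a remark before the claim), while the paper argues directly by contradiction with the first edge in $J\times(R\setminus J)$; this is a presentational difference, not a different route.
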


We contract all the black edges in $\MST$ and remove all the white edges. The resulting graph is a forest $\Upsilon$ of trees.  Each node  (we use the word ``nodes'' for vertices in the contracted graph) in $\Upsilon$ is correspondent to a black component, and each edge is a directed grey edge.  For every node $p$ in $\Upsilon$, we use $J_p \subseteq R$ to denote the black component correspondent to the node $p$.  Abusing notations slightly, we define $U_p:=U(J_p) = \union_{v \in J_p} U_v$. The weight of the node $p$ is the total weight of representatives in $J_p$, i.e, $y(U_p)$.

\begin{lemma}
	\label{lemma:contracted-tree}
	For any tree $\tau \in \Upsilon$, the following statements are true: 
	\begin{properties}{lemma:contracted-tree}
		\item $\tau$ has a root node $r_\tau$ such that all grey edges in $\tau$ are directed towards $r_\tau$;
		\label{property:rooted-tree}
		\item $J_{r_\tau}$ is big and $J_p$ is small for all other nodes $p$ in $\tau$;
		\label{property:nodes-in-tree-small}
		\item in any leaf-to-root path of $\tau$, the lengths of grey edges form a non-increasing sequence;
		\label{property:lengths-decrease}
		\item for any non-root node $p$ of $\tau$, the length of the grey edge in $\tau$ connecting $p$ to its parent is exactly $d(J_p, R \setminus J_p)$;
		\label{property:grey-edge-length}
		\item for any non-root node $p$ of $\tau$, the length of any black edge in ${J_p \choose 2}$ is at most $d(J_p, R \setminus J_p)$.
		\label{property:black-edge-length}
	\end{properties}
\end{lemma}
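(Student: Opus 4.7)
The plan is to analyze the state of Kruskal's algorithm at the times various edges are processed, exploiting that edges are considered in non-decreasing order of length. The central claim I will establish is: for any non-root node $p$ with outgoing grey edge $e_p=(u,v)$, $u\in J_p$, at the moment $e_p$ is processed the Kruskal group containing $u$ equals $J_p$. Two observations drive this. First, before any grey or white $\MST$-edge incident to $J_p$ has been processed, the Kruskal group of every vertex of $J_p$ is a subset of $J_p$, since only internal black edges have been added across the cut so far. Second, because $\MST$ is acyclic and $J_p$ is connected by black edges, every $\MST$-edge with both endpoints in $J_p$ is black. Combining these, if $u$'s group at the time of $e_p$ were a proper subset $A\subsetneq J_p$, then after $e_p$ the group containing $u$ becomes big, yet Kruskal must still add the internal black $\MST[J_p]$-edges connecting $A$ to $J_p\setminus A$; each such edge would straddle a big and a small Kruskal group, contradicting its being black.

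For (a), I will first show that each node $p$ has at most one outgoing grey edge in $\tau$. If there were two such edges pointing to components $J_{q_1}$ and $J_{q_2}$, then after both are processed the two $J_p$-endpoints lie in distinct big Kruskal groups; to reunify $J_p$, Kruskal must subsequently add the internal black $\MST[J_p]$-edges along the path between these endpoints, but each of them would then straddle two big groups and could not be black — contradiction. A tree with $m$ nodes, $m-1$ oriented edges, and every out-degree at most one has a unique sink, which is $r_\tau$, and every edge is directed toward it. For (b), the central claim forces $J_p$ small whenever $p$ is non-root, since the entire $J_p$ is $u$'s group at the time of $e_p$ and that group is small by the grey-outgoing condition. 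For the root, if $J_{r_\tau}$ has any grey or white $\MST$-edge incident to it, the first one processed has its $J_{r_\tau}$-side contained in $J_{r_\tau}$ but big (as either incoming grey or white), so $J_{r_\tau}$ itself is big; the residual case $R=J_{r_\tau}$ with no incident cut edge is a trivial regime where total LP mass is below $\ell$.

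Properties (c)--(e) then follow from Kruskal's ordering. For (c), at the processing of $e_i=(p_{i-1},p_i)$ the $p_i$-side must be big, and by the central claim this cannot occur until $p_i$'s outgoing edge $e_{i+1}$ has itself been processed; hence $|e_{i+1}|\le|e_i|$. For (d), any $a\in J_p$ and $b\notin J_p$ with $d(a,b)<|e_p|$ would yield, at Kruskal's consideration of $(a,b)$, either (i) $(a,b)\in\MST$ as a grey or white edge incident to $J_p$ processed before $e_p$, or (ii) $a$ and $b$ already in one group via some prior grey or white $\MST$-edge incident to $J_p$; both contradict (b) together with the unique-outgoing argument from (a). For (e), the central claim says every internal black $\MST[J_p]$-edge is processed before $e_p$, so each has length at most $|e_p|=d(J_p,R\setminus J_p)$. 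The main obstacle is the central ``group equals $J_p$'' claim and its twin used to rule out a second outgoing edge; once these Kruskal-state arguments are in place, the rest is bookkeeping about processing order.
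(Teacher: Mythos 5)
Your proof is correct and follows essentially the same strategy as the paper: analyze Kruskal's execution order and the evolution of groups, with your central claim that at the moment $e_p$ is processed the Kruskal group of $u$ equals $J_p$ playing the role (and essentially re-deriving the substance) of Claim~\ref{claim:black-component-in-set} and of the paper's description of how $\tilde\tau$ grows. The only small imprecision is in the residual case $R=J_{r_\tau}$: it is possible that the final black merge creates a group with $y(U_R)\in[\ell,2\ell)$, in which case $J_{r_\tau}$ is big and (b) holds directly, so the phrase ``total LP mass is below $\ell$'' should be read as describing only the sub-case where $J_{r_\tau}$ is in fact small.
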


We now break $\tau$ into a set $\calT_\tau$ of edge-disjoint sub-trees using the following greedy algorithm. Consider the deepest node $p$ in $\tau$ such that the total weight of all descendant nodes of $p$ (we do not count the weight of $p$) is at least $\ell$.  First assume that the node $p$ exists.   

Notice that the weight of each node other than the root node $r_\tau$ is less than $\ell$ by Property~\ref{property:nodes-in-tree-small}.  By our choice of $p$, the weight of any sub-tree rooted at a child of $p$ is at most $\ell + \ell = 2\ell$. Thus, a simple greedy algorithm can give us a collection of sub-trees rooted at some children of $p$, with total weight between $\ell$ and $2\ell$.   We build a tree $T$ as follows:  take the collection of sub-trees, the node $p$, as well as the edges connecting the roots of the sub-trees to $p$. We add the tree $T$ into $\calT_\tau$. Let $r_T = p$ be the root node of $T$.  

We remove the collection of sub-trees from $\tau$ and repeat the above process to find another tree $T$. We terminate the process when the node $p$ does not exist. In this case, we add the remaining tree to $\calT_\tau$.  It is easy to see the following statements about $\calT_\tau$.
\begin{lemma}
	\label{lemma:decomposition}
	\begin{properties}{lemma:decomposition}
		\item Every non-root node $p$ of $\tau$ appears in exactly one tree in $\calT_\tau$ as a non-root. \label{property:every-node-appears-once}
		\item The number of trees in $\calT_\tau$ is at most $1/\ell$ times the total weight of nodes in $\tau$.
		\label{property:number-trees-small}
		\item Let $T \in \calT_\tau$ and $\tilde T$ be the tree obtained from $T$ by un-contracting the nodes of $T$ to their correspondent black components. Then $\tilde T$ contains at most $8\ell$ vertices. \label{property:size-small}
	\end{properties}
\end{lemma}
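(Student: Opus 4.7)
Parts (a) and (b) are essentially combinatorial bookkeeping on the greedy procedure; (c) is where the substantive argument lies. For (a), my plan is to track each non-root node $q$ of $\tau$ through the iterations: at every iteration, either $q$ lies in one of the collected sub-trees and so becomes a non-root of the resulting tree $T$ (whose root is the chosen node $p\ne q$), or $q$ survives. If $q$ survives all iterations it ends up in the final remaining tree as a non-root, since that tree still has root $r_\tau\ne q$. Uniqueness follows because once $q$ is removed from $\tau$ it never reappears.

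For (b), let $W=\sum_{p\in\tau}y(U_p)$. Every non-final iteration removes a set of nodes, all non-root in $\tau$ (the collected sub-trees are rooted at children of the chosen $p$, so $r_\tau$ is excluded), of total weight at least $\ell$; by (a) these removed sets are disjoint across iterations. Hence the number of non-final iterations is at most $(W-y(U_{r_\tau}))/\ell$, and since $y(U_{r_\tau})\ge\ell$ by Property~\ref{property:nodes-in-tree-small}, the extra $+1$ for the final tree is absorbed, giving $|\calT_\tau|\le W/\ell$.

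The main obstacle is in (c): the subclaim that every black component $J\subseteq R$ satisfies $y(U_J)<2\ell$. My plan is to examine the iteration $I$ of Kruskal's algorithm at which the chronologically last black edge $e=(u,v)$ of $J$ is added, and to let $A,B$ be the two groups merged at step $I$. Because $e$ is black, $A$ and $B$ are both small; moreover, a currently-small Kruskal group must have been built using only black merges, since any merge that involves a big component produces a big component. So $A$ and $B$ are each black-connected and therefore contained in the black component $J$. Conversely, deleting $e$ from the spanning tree of $J$'s black edges splits $J$ into a $u$-side and a $v$-side, and every $w\in J$ reaches $u$ or $v$ via black edges added before $I$, so $w$ already lies in $A$ or in $B$. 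Hence $J=A\cup B$ and $y(U_J)<2\ell$, which combined with $y(U_v)\ge 1/2$ (Property~\ref{property:bundle-large}) gives $|J_p|\le 2\,y(U_p)<4\ell$ for every node $p$ of $\tau$.

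To finish (c), I bound the total weight of any $T\in\calT_\tau$ by $4\ell$: a non-final $T$ has root weight $<2\ell$ (by the subclaim) plus collected sub-tree weight at most $2\ell$, while the final $T$ has root weight $<2\ell$ plus descendant weight $<\ell$ (the termination condition). Summing $|J_p|\le 2\,y(U_p)$ over $p\in T$ then yields $|\tilde T|<8\ell$, as claimed.
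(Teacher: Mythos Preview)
Your arguments for (a) and (b) are correct and match the paper's reasoning (the paper phrases (b) via disjoint ``contributing sets'' of weight at least $\ell$, which is exactly your removed-weight accounting).

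For (c), there is one genuine gap. Your subclaim that \emph{every} black component $J$ satisfies $y(U_J)<2\ell$ is false as stated: if $J=\{v\}$ is a singleton, then $y(U_J)=y(U_v)$ can be arbitrarily large (up to $k$). Your proof of the subclaim begins by choosing ``the chronologically last black edge of $J$'', which already presupposes $|J|\ge 2$, so what you have actually shown is the bound for non\nobreakdash-singleton black components. This matters because the root component $J_{r_\tau}$ may well be a singleton of weight $\gg 2\ell$, and trees $T\in\calT_\tau$ (both non-final and final) can be rooted at $r_\tau$. In that scenario your ``total weight of $T$ is at most $4\ell$'' fails, and with it the final step $|\tilde T|\le 2\cdot 4\ell$.

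The fix is immediate and is precisely what the paper does: split on whether $J_{r_\tau}$ is a singleton. If it is, it contributes exactly one vertex after un-contracting, while the non-root part of $T$ has weight at most $2\ell$ and hence at most $4\ell$ vertices, giving $|\tilde T|\le 4\ell+1\le 8\ell$. If it is not, your Kruskal argument yields $y(U_{r_\tau})<2\ell$ and your weight bound goes through. In fact your last-black-edge argument is a clean justification of the dichotomy ``weight $\le 2\ell$ or singleton'' that the paper asserts without proof; once you add the singleton case split, your proof and the paper's coincide.
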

%TODO: prove first two properties
\begin{proof}
	Property~\ref{property:every-node-appears-once} simply follows from the construction of the $\calT_\tau$. 
	Focus on a tree $T = (P, E_T) \in \calT_\tau$ rooted at $r$ that is not the last tree added to $\calT_\tau$. The total weight of all nodes in $P \setminus r$ is $y(U(P \setminus r)) \in [\ell, 2\ell]$.   Let the ``contributing set'' of the tree $T$ be $P \setminus r$. The total weight all nodes in the last tree added to $\calT_\tau$ is at least $\ell$, since the root node $r_\tau$ has weight at least $\ell$. Let the ``contributing set'' of the last tree $T$ be $P$. Since the contributing sets are disjoint and each contributing set has weight at least $\ell$,  the number of trees in $\calT_\tau$ is at most $1/\ell$ times the total weight of $\tau$. We proved Property~\ref{property:number-trees-small}.
	
	Focus on a tree $T = (P, E_T) \in \calT_\tau$ such that the root of $T$ is not $r_\tau$. If we uncontract all nodes in $P$ to their respective black components, then the number of vertices in the resulting tree is at most $6\ell$. This holds since the total weight of all nodes in $P$ is at most $3\ell$ and each vertex $v$ has weight $y(U_v) \geq 1/2$.  Now suppose the root of $T$ is $r_\tau$. The root node $r_\tau$ either has weight at most $2\ell$, or $J_{r_\tau}$ is a singleton. In either case, we can bound the number of vertices in $\tilde T$ by $(2\ell + 2\ell)/(1/2) = 8\ell$.  So Property~\ref{property:size-small} holds.
\end{proof}

The sub-trees in $\calT_\tau$ for all trees $\tau \in \upsilon$ gave us a partition of $R$.  For every tree $\tau \in \Upsilon$ and every $T = (P, E_T) \in \calT_\tau$ with root $r$, there is a group $J_{P \setminus r}$ in the partition.  Also, for any root $r_\tau$ of $\tau \in \Upsilon$, there is a group $J_{r_\tau}$ in the partition. Our algorithm handles each group separately. 
 
\paragraph{Pre-opening facilities and pre-assigning clients} 
In order to bound the moving cost within each group, we need to handled the so-called ``concentrated sets''.  Before defining concentrated sets, we need to define an important function of sets of representatives:
\begin{definition}
	Let $\pi(J) = \sum_{j \in C}x_{U(J),j} (1-x_{U(J), j})$, for every $J \subseteq R$. 
\end{definition}
The next lemma shows the importance of the function $\pi(J)$.
\begin{lemma}
	\label{lemma:x-j-times-one-minus-x-j-small}
	Given any non-trivial subset $J \subseteq R$ of representatives, we have 
	\begin{equation*}
		\setlength{\abovedisplayskip}{3pt}\setlength{\belowdisplayskip}{3pt}
		d(J, R \setminus J) \pi(J) \leq 4D(U_J) + 10D'(U_J).
	\end{equation*}
\end{lemma}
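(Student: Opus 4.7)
The plan is to exploit the identity
\[
\pi(J)=\sum_{j\in C}x_{U(J),j}\bigl(1-x_{U(J),j}\bigr)=\sum_{j\in C}\sum_{i\in U(J)}\sum_{i'\in F\setminus U(J)}x_{i,j}\,x_{i',j},
\]
which rewrites $\pi(J)$ as a sum over triples $(j,i,i')$ with $j\in C$, $i\in U(J)$ and $i'\in F\setminus U(J)$. For each such triple I will derive a single triangle-inequality bound
\[
d(J,R\setminus J)\;\le\;2\,d(i,j)+2\,d(i',j)+8\,\dav(j),
\]
and then multiply by $x_{i,j}x_{i',j}$ and sum over all triples.

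To derive the key bound, let $v_i\in J$ be the representative with $i\in U_{v_i}$ and $v_{i'}\in R\setminus J$ the representative with $i'\in U_{v_{i'}}$. Since $v_i\in J$ and $v_{i'}\in R\setminus J$, we have $d(v_i,v_{i'})\ge d(J,R\setminus J)$. Property~\ref{property:facility-to-representative}, applied twice, gives $d(i,v_i)\le d(i,j)+4\,\dav(j)$ and $d(i',v_{i'})\le d(i',j)+4\,\dav(j)$, and the triangle inequality along the path $v_i\to i\to j\to i'\to v_{i'}$ yields exactly the claimed bound.

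Summing the inequality weighted by $x_{i,j}x_{i',j}$ over all triples turns the left-hand side into $d(J,R\setminus J)\cdot\pi(J)$. The $2\,d(i,j)$ term sums to at most $2\,D(U_J)$ after factoring out $\sum_{i'\in F\setminus U(J)}x_{i',j}\le 1$. The $8\,\dav(j)$ term sums to at most $8\,D'(U_J)$ by the same factoring and the identity $D'(U_J)=\sum_{i\in U_J,\,j\in C}x_{i,j}\dav(j)$. The cross term $2\,d(i',j)$ is the only one that involves facilities outside $U(J)$; after pulling $x_{U(J),j}=\sum_{i\in U(J)}x_{i,j}$ to the front one is left with $\sum_{i'\notin U(J)}x_{i',j}\,d(i',j)$, which is a partial sum of $\dav(j)$ and hence bounded by $\dav(j)$, giving a contribution of at most $2\,D'(U_J)$. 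Summing yields $d(J,R\setminus J)\pi(J)\le 2D(U_J)+10D'(U_J)\le 4D(U_J)+10D'(U_J)$.

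The only subtle step is this last cross term: unlike the other two contributions, it cannot be absorbed directly into $D(U_J)$ because it involves facilities $i'\in F\setminus U(J)$. Converting it to $D'(U_J)$ through the observation that $\sum_{i'\notin U(J)}x_{i',j}d(i',j)$ is a partial sum of the total connection cost $\dav(j)$ is the one slightly nonobvious move; the rest is a direct application of triangle inequality together with Property~\ref{property:facility-to-representative}.
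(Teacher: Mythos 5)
Your proof is correct, and it follows the same overall strategy as the paper: rewrite $\pi(J)$ as $\sum_{j\in C,\, i\in U_J,\, i'\in F\setminus U_J}x_{i,j}x_{i',j}$, bound $d(J,R\setminus J)$ for each triple by a chain of triangle inequalities together with Property~\ref{property:facility-to-representative}, and then absorb the three resulting terms into $D(U_J)$ and $D'(U_J)$ using $\sum_{i'}x_{i',j}\le 1$ and $\sum_{i'\notin U_J}x_{i',j}d(i',j)\le \dav(j)$. The one place you diverge is the chain itself: the paper first proves $d(J,R\setminus J)\le 2d(i',J)$ (using that $i'\in U_{v'}$ for some $v'\in R\setminus J$ implies $d(i',R\setminus J)\le d(i',J)$) and then triangles from $i'$ to $j$ to $i$ to $J$, picking up an overall factor $2$ from this first step; you instead go directly $v_i\to i\to j\to i'\to v_{i'}$ and use $d(J,R\setminus J)\le d(v_i,v_{i'})$, applying Property~\ref{property:facility-to-representative} at both ends. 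This avoids the extra factor $2$ on the $D(U_J)$ term and gives the slightly sharper bound $2D(U_J)+10D'(U_J)$, which of course implies the lemma's $4D(U_J)+10D'(U_J)$. Either chain works; yours is marginally cleaner and the constants are marginally better, but there is no substantive difference in the underlying argument.
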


Notice that in an isolated group $B = U_J$ in the gap instance, each client is either completely served by $B$, or completely served by $F \setminus B$. Thus $\pi(J)$ is $0$. So the inequality holds no matter how big $d(J, R \setminus J)$ is.  In some sense, $\pi(J)$ measures how many clients are both served by $U_J$ and $F \setminus U_J$.  The bigger $\pi(J)$, the smaller $d(J, R \setminus J)$ is. 

\begin{definition}
	A set $J \subseteq R$ of representatives is said to be \emph{concentrated} if $\pi(J) \leq x_{U(J), C} /\ell_2$.
\end{definition}

Recall that $x_{U(J),C}$ is the total demand in $U_J$ after all demands are moved to the representatives using the algorithm in Section~\ref{sec:basic-rounding} and $\ell_2 = \Theta(\ell\log\ell)$ is a large enough number. Thus, according to Lemma~\ref{lemma:x-j-times-one-minus-x-j-small}, if $J$ is not concentrated, we can use $D(U_J) + D'(U_J)$ to charge the cost for moving all the $x_{U(J), C}$ units of demand out of $J$, provided that the moving distance is not too big compared to $d(J, R \setminus J)$. If $J$ is concentrated, the amount of demand that is moved out of $J$ must be comparable to $\pi(J)$.  To achieve this goal, we will use the following lemma to pre-assign some clients $C'$ so that the remaining demands $x_{U_J, C \setminus C'}$ inside $U_J$ is comparable to $\pi(J)$.

\begin{lemma}
	\label{lemma:dealing-with-concentrated-sets} If $\ell_2 = O(\ell \log \ell)$ is large enough then the following is true.
	Let $J \subseteq R$ be a concentrated set and $B = U_J$ satisfies $y_B \leq 2\ell$. Moreover, Constraints~\eqref{CLPC:add-to-one} to~\eqref{CLPC:more-than-ell-facilities} are satisfied for $B$.  Then, we can pre-open a set $S \subseteq B$ of %at most $\floor{(1+1/\ell)y(B)} + 1 - \Omega\big(\frac{1}{\ell \log \ell }\big)\frac{y_B}{P}$ 
	facilities and pre-assign a set $C' \subseteq C$ of clients to $S$ such that
	\begin{properties}{lemma:dealing-with-concentrated-sets}
		\item each facility $i \in S$ is pre-assigned at most $u_i$ clients; \label{property:concentrated-capacity}
		\item $x_{B, C\setminus C'} \leq \ell_2 \pi(J)$;\label{property:concentrated-satisfy-enough-demands}
		\item $\frac{x_{B, C \setminus C'}}{x_{B, C}}{y_B} + |S| \leq \Big(1+\frac1\ell\Big)y_B$; 
		\label{property:concentrated-not-too-many-facilities}
		\item the cost for the pre-assignment is at most $\ell_2 D_B$. 
		\label{property:concentrated-distance-small}
	\end{properties}
\end{lemma}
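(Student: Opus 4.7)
My plan is to interpret the $\calS$-portion of the configuration-LP values $\{z^B_S, z^B_{S,i,j}\}$ as a fractional distribution over integer pairs ``open-set, matching'' and then argue that some realization satisfies (a)--(d) simultaneously. The assumption $y_B \leq 2\ell$, together with \eqref{CLPC:more-than-ell-facilities} and \eqref{CLPC:add-to-y}, gives $z^B_\bot \leq y_B/\ell_1 < 1$, so after rescaling by $1-z^B_\bot$ the $\calS$-portion becomes a genuine probability distribution, with only an $O(\ell)$-factor blowup to absorb.

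First I would decompose the data. For every $S \in \calS$ with $z^B_S > 0$, properties \eqref{CLPC:i-irrelevant}, \eqref{CLPC:j-connection-bound}, \eqref{CLPC:capacity} imply that $\{z^B_{S,i,j}/z^B_S\}_{i \in S, j \in C}$ is a fractional $b$-matching between $S$ (capacities $u_i$) and $C$ (unit capacities); by integrality of the bipartite $b$-matching polytope, it is a convex combination of integer $b$-matchings. Combining these decompositions with the LP distribution on $S$ yields a probability distribution $\mathcal{D}$ over pairs $(S, M)$ of integer configurations and capacity-respecting integer matchings. Direct calculations from \eqref{CLPC:add-to-y}, \eqref{CLPC:add-to-x}, \eqref{CLPC:j-connection-bound}, and \eqref{CLPC:more-than-ell-facilities} give
\begin{equation*}
\textstyle \E_\mathcal{D}[|S|] \leq y_B, \quad \E_\mathcal{D}[x_{B, C \setminus C'(M)}] \leq \frac{\pi(J)}{1-z^B_\bot}, \quad \E_\mathcal{D}[\cost(M)] \leq \frac{D_B}{1-z^B_\bot}.
\end{equation*}
The first inequality uses the crucial bound $\sum_i z^B_{\bot,i} \geq \ell_1 z^B_\bot \geq y_B z^B_\bot$ (since $\ell_1 \geq y_B$) to absorb the $1/(1-z^B_\bot)$ factor, and the second uses \eqref{CLPC:j-connection-bound} to kill the cross-terms. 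With $\pi(J) \leq x_{B,C}/\ell_2$ and $1/(1-z^B_\bot) = O(\ell)$, the latter two expectations fall below the targets of (b) and (d) by a factor of $\Theta(\log \ell)$; (a) is automatic from integrality of $M$. At the algorithmic level, since $|S| \leq \ell_1 = O(\ell)$ there are only $n^{O(\ell)}$ candidate configurations, so we may enumerate over $S$ and solve a min-cost $b$-matching for each.

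The main obstacle is property (c): its allowed slack $y_B/\ell$ is tighter than the Markov-style margin $O(y_B/\log\ell)$ produced by the derived bound $\E_\mathcal{D}[|S| - y_B x_{B,C'}/x_{B,C}] \leq y_B \pi(J)/(x_{B,C}(1-z^B_\bot))$. I expect to close the gap via a Chernoff-type tail bound on $x_{B, C'(M)}$ conditional on a fixed $S$: the fractional $b$-matching for a given $S$ can be rounded with negatively correlated indicators $\mathbf{1}[j \in C']$, so $x_{B, C'}$ concentrates around its conditional mean, and $\ell_2 = \Theta(\ell \log \ell)$ is exactly what the tail exponent needs to push the fluctuation below $y_B/\ell$. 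A union bound across (b)--(d) then yields a positive-probability event inside $\mathcal{D}$, and the enumeration over $S$ combined with min-cost matching exhibits a specific witness $(S, C')$.
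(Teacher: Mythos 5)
Your proposal correctly sets up the same random experiment as the paper: decompose the fractional $b$-matching $\{z^B_{S,i,j}/z^B_S\}$ for each $S\in\calS$ into integral matchings, thereby viewing the $z$ variables as a distribution over pairs $(S, M)$, then derive the expectation bounds $\E[|S|]\le y_B$, $\E[x_{B,C\setminus C'}]\le \pi(J)/(1-z^B_\bot)$ and $\E[\cost(M)]\le D_B/(1-z^B_\bot)$, all of which are correct and match the paper's intermediate calculations. You also correctly identify that property~(c) is where the difficulty lies. But your proposed fix---a Chernoff bound on $x_{B,C'}$ conditional on a fixed $S$---does not close the gap, for two reasons.

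First, the randomness that threatens~(c) lives in the choice of $S$, not in the matching rounding given $S$. Nothing in Constraints~\eqref{CLPC:add-to-one}--\eqref{CLPC:more-than-ell-facilities} prevents, say, $y_B = 2.5$ with the conditional distribution putting half its mass on $|S|=2$ and half on $|S|=3$. For the $|S|=3$ half, the slack $Y-|S|=(1+1/\ell)y_B - 3$ is negative, so~(c) is violated no matter how well $x_{B,C\setminus C'}$ concentrates. Conditioning on a fixed $S$ freezes $|S|$; there is nothing to concentrate, and the event ``$S$ is small enough'' must be handled separately. Second, even restricting to $S$ with $|S|<Y$, the \emph{conditional} mean $\E[x_{B,C\setminus C'}\mid S]$ is not controlled: the bound $\pi(J)/(1-z^B_\bot)$ is an average over $S$, and for an individual $S$ the conditional mean can be much larger. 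Chernoff concentration around an uncontrolled mean gives nothing. Your union bound over (b)--(d) never confronts either issue.

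The paper's proof resolves both difficulties with a single device you are missing: it partitions $\calS$ into $O(\log\ell)$ \emph{rank classes}, where class $t$ consists of those $S$ with slack $Y-|S|\in[2^{t-1},2^t)$ (class $0$ handling $Y-|S|<1$), and then conditions on the rank class $*$ that carries at least a $1/(\delta\ell)$ fraction of the weighted slack. Within a rank class the slack $Y-|S|$ is pinned down to within a factor of~$2$, which fixes issue one. More importantly, the probability $q=\Pr[S\in *]$ and the slack $Y-|S|$ satisfy the reciprocal relationship $Y-|S|\gtrsim y_B/(q\,\ell\log\ell)$, while Markov on the conditional expectation gives $x_{B,C\setminus C'}\lesssim\pi(J)/q\le x_{B,C}/(\ell_2 q)$; the $1/q$ factors cancel in the comparison $\frac{x_{B,C\setminus C'}}{x_{B,C}}y_B \le Y-|S|$, and this is precisely why $\ell_2=\Theta(\ell\log\ell)$ suffices. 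This averaging-over-rank-classes argument is the heart of the lemma and cannot be replaced by conditioning on a single $S$ plus concentration. You would need to add the rank decomposition (or some equivalent mechanism that jointly controls $Y-|S|$ and the conditional demand via a common event probability) for the proof to go through.
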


Property~\ref{property:concentrated-satisfy-enough-demands} says that the remaining demand in $B = U_J$ is small. Property~\ref{property:concentrated-distance-small} says that the cost for the pre-assignment can be charged locally using $D_B$.  Property~\ref{property:concentrated-not-too-many-facilities} says that even after we pre-opened the facilities in $S$, we can still afford to open $\frac{x_{B, C \setminus C'}}{x_{B, C}}y_B$ facilities. Notice that if $J$ is not concentrated, we can simply take $S = \emptyset $ and $C = \emptyset$ to satisfy all the properties. 

We defer the formal proof of Lemma~\ref{lemma:dealing-with-concentrated-sets} to Appendix~\ref{appendix:proofs} and give the key ideas here, assuming $\pi(J) = 0$ and $z^B_\bot = 0$. The fractional vector on variables $\set{x_{i,j}}_{i \in B, j \in C}$ and $\set{y_i}_{i \in B}$ can be expressed as a convex combination of valid integral vectors.  That is,  we can randomly open a set $S \subseteq B$ of facilities and connect some facilities $C'$ to facilities in $S$ such that: (i) each facility $i$ is open with probability $y_i$; (ii) each client $j$ is connected with probability $x_{B,j}$ and (iii) each facility $i$ is connected by at most $u_i$ clients. $\pi(J) = 0$ implies that every client $j$ has either $x_{B, j} = 0$ or $x_{B, j} = 1$.  If $x_{B, j} = 1$, $j \in C'$ always holds. Thus, we always have $x_{B, C \setminus C'} = 0$, which implies Properties~\ref{property:concentrated-satisfy-enough-demands} and Property~\ref{property:concentrated-not-too-many-facilities}. The expected assignment cost will be at most $D_B$. We condition on the event that $|S| \leq (1+1/\ell)y_B$, which happens with probability at least $\Omega(1/\ell)$. So, under this condition, the expected connection cost will be $O(\ell) D_B$, implying Property~\ref{property:concentrated-distance-small}.  When $\pi(J) > 0$, we use a smooth version of this proof.  As we only conditioned on the event that $|S| \leq (1+1/\ell)y_B \leq \ell_1$, $z^B_\bot  > 0$ is not an issue. In the proof, we group the sets in $\calS$ into $O(\log \ell)$ groups; this is the reason that we require $\ell_2 = \Theta(\ell \log \ell)$. 

With Lemma~\ref{lemma:dealing-with-concentrated-sets}, we can now pre-open some facilities, and pre-assign some clients to these pre-opened facilities.   We handle each tree $\tau \in \Upsilon$ separately.  For every node $p$ in $\tau$ other than the root $r_\tau$, the weight of $p$ is at most $\ell$. For the root $r_\tau$, either the weight of $r_\tau$ is at most $2\ell$, or $J_{r_\tau}$ contains only one vertex. So, we apply the following procedure to $p$ if $p$ is not the root, or if $p$ is the root and the weight of $p$ is at most $2\ell$.  If $J_p$ is concentrated, we check if Constraints~\eqref{CLPC:add-to-one} to \eqref{CLPC:more-than-ell-facilities} are satisfied for $B = U_p$. If not, we return the set $B = U_p$.  Otherwise, we apply Lemma~\ref{lemma:dealing-with-concentrated-sets} to the set $J_p$ to pre-open some facilities and pre-assign some clients.  Notice that the total cost for pre-assignment is at most $\ell_2 D_F = \ell_2\LP$, as the sets $J$ for which we apply Lemma~\ref{lemma:dealing-with-concentrated-sets} are disjoint.

As the pre-assignment processes for all nodes $p$ are done independently, it is possible that a client is pre-assigned more than once. This is not an issue as we only over-estimated the cost for the pre-assignment. Let $\tC$ be the set of clients that are never pre-assigned.

\paragraph{Moving Demands} We only need to focus on clients in $\tC$ now. Each client $j \in \tC$ initially has one unit of demand.  We need to move all demands to facilities. First, for every $j \in \tC$ and $i \in F$, we move $x_{i, j}$ units of demand from $j$ to $i$. The moving cost is $\sum_{j \in \tC}\dav(j) \leq \LP$.
After this step, each facility $i \in F$ has $x_{i, \tC}$ units of demand.

Let us focus on a tree $\tau$ in the forest $\Upsilon$ and a tree $T = (P, E) \in \calT_\tau$ with root node $r$. $P$ is the set of nodes in $T$ and $E$ is the set of grey edges.    Recall that $U_{P \setminus r} = \union_{p \in P \setminus r}U_p$.  We move the demands within $U_{P \setminus r}$. The moving process is simple. Let $v^* \in J_r$ be an arbitrary representative in $J_r$. We first move all demands in $U_{P \setminus r}$ to $v^*$, then we move the demand in $v^*$ to $U_{P \setminus r}$ according to some distribution.  

%TODO later, say open 2 copies

In order to find the distribution, we shall use  $\alpha_i$ to denote the amount of demand we shall move to facility $i$,  for any $i \in U_{P \setminus r}$. Let $\alpha(S) :=  \sum_{i \in S} \alpha_i$ for every $S \subseteq U_{P \setminus r}$.  Let $t$ be the number of pre-opened facilities in $U_{P \setminus r}$. As we are considering the soft-capacitated case, we can open the facility $i$ even if $i$ is pre-opened.   We solve the following LP to obtain $\{\alpha_i\}_{i \in U_{P \setminus r}}$:
\begin{align}
\min \qquad \sum_{i \in U_{P \setminus r}}\alpha_i d(i, v^*)\qquad \text{ s.t.} \hspace{0.25\textwidth}\label{equ:polytopeobj}\\
		\alpha_i \in [0, u_i], \forall i \in U_{P \setminus r}; \qquad 
		\alpha(U_{P \setminus r}) = x_{U_{P \setminus r}, \tilde C}; \qquad
		\sum_{i \in U_{P \setminus r}}\frac{\alpha_i}{u_i} + t  \leq \big(1+\frac1\ell\big)y(U_{P \setminus r}). \nonumber
\end{align}

The objective function of LP\eqref{equ:polytopeobj} is the moving cost.  The first constraint requires that we move at most  $u_i$ units of demand to $i$, the second constraint says the total demand is $x_{U_{P \setminus r}, \tC}$, and the last constraint bounds the total number of open facilities, including the $t$ pre-opened facilities.  We give a valid solution $\{\tilde \alpha_i\}_{i \in U_{P \setminus r}}$ for the LP: for every $p \in P \setminus r$ and $i \in U_p$, we let $\tilde \alpha_i = \frac{x_{U_p, \tC}}{x_{U_p, C}}x_{i,C}$. 
\begin{claim}
	\label{claim:solution-in-polytope}
	$\{\tilde \alpha_i\}_{i \in U_{P \setminus r}}$ satisfies the constraints of LP\eqref{equ:polytopeobj}.
\end{claim}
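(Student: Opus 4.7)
The plan is to verify each of the three constraints of LP\eqref{equ:polytopeobj} for the candidate solution $\tilde\alpha$ in turn. The first two are essentially bookkeeping, while the third is the one that leverages the earlier pre-opening work.

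For the box constraint $\tilde\alpha_i\in[0,u_i]$, observe that since $\tC\subseteq C$ we have $x_{U_p,\tC}\leq x_{U_p,C}$, so $\tilde\alpha_i\leq x_{i,C}\leq u_i y_i\leq u_i$ by Constraint~\eqref{LPC:capacity}. For the demand equality, I would compute
\begin{equation*}
\sum_{i\in U_{P\setminus r}}\tilde\alpha_i \;=\; \sum_{p\in P\setminus r}\frac{x_{U_p,\tC}}{x_{U_p,C}}\sum_{i\in U_p}x_{i,C} \;=\; \sum_{p\in P\setminus r}x_{U_p,\tC} \;=\; x_{U_{P\setminus r},\tC},
\end{equation*}
using that the Voronoi regions $\{U_v\}_{v\in R}$ partition $F$ and the definition of $x_{U_p,C}$.

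The main step, and the one I expect to be the only real obstacle, is the facility budget $\sum_i\tilde\alpha_i/u_i+t\leq(1+1/\ell)y(U_{P\setminus r})$. The idea is to bound the contribution of each node $p\in P\setminus r$ separately and then sum. Fix such a $p$; by Property~\ref{property:nodes-in-tree-small} the weight of $p$ is at most $\ell\leq 2\ell$, so the pre-opening procedure was indeed applied to $p$. Let $S_p\subseteq U_p$ and $C'_p\subseteq C$ be the pre-opened facilities and pre-assigned clients at $p$ (both empty if $J_p$ was not concentrated, otherwise coming from Lemma~\ref{lemma:dealing-with-concentrated-sets}). Since $\tC$ contains no client pre-assigned anywhere, in particular $\tC\subseteq C\setminus C'_p$, so $x_{U_p,\tC}\leq x_{U_p,C\setminus C'_p}$. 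Combining this with $x_{i,C}\leq u_i y_i$,
\begin{equation*}
\sum_{i\in U_p}\frac{\tilde\alpha_i}{u_i} \;=\; \frac{x_{U_p,\tC}}{x_{U_p,C}}\sum_{i\in U_p}\frac{x_{i,C}}{u_i} \;\leq\; \frac{x_{U_p,\tC}}{x_{U_p,C}}\,y(U_p) \;\leq\; \frac{x_{U_p,C\setminus C'_p}}{x_{U_p,C}}\,y(U_p).
\end{equation*}

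Now I would invoke Property~\ref{property:concentrated-not-too-many-facilities} (which trivially holds when $J_p$ is not concentrated, since then $S_p=C'_p=\emptyset$) to get $\frac{x_{U_p,C\setminus C'_p}}{x_{U_p,C}}y(U_p)+|S_p|\leq(1+1/\ell)y(U_p)$ for every $p\in P\setminus r$. Summing this inequality over $p\in P\setminus r$, using $t=\sum_{p\in P\setminus r}|S_p|$ and $y(U_{P\setminus r})=\sum_{p\in P\setminus r}y(U_p)$, yields exactly the third constraint, completing the verification.
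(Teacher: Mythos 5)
Your proposal is correct and takes essentially the same approach as the paper: verify the box constraint via $\tilde\alpha_i \le x_{i,C} \le u_i$, the demand equality by summing over nodes, and the budget constraint node-by-node via Property~\ref{property:concentrated-not-too-many-facilities}. The only difference is that you make explicit the observation $\tC \subseteq C\setminus C'_p$ (hence $x_{U_p,\tC}\le x_{U_p,C\setminus C'_p}$), which the paper uses implicitly in the chain $\frac{x_{U_p,\tC}}{x_{U_p,C}}y(U_p)+t_p\le(1+\frac1\ell)y(U_p)$.
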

We find a vertex solution $(\alpha^*_i)_{i \in U_{P \setminus r}}$ of LP\eqref{equ:polytopeobj}. So, for all but at most $2$ facilities $i \in U_{P \setminus r}$, we have $\alpha^*_i \in \set{0, u_i}$. $\{\alpha^*_i\}_{i \in U_{P \setminus r}}$ satisfies the constraints of LP\eqref{equ:polytopeobj}, and $\sum_{i \in U_{P \setminus r}}\alpha^*_i d(i, v^*) \leq \sum_{i \in U_{P \setminus r}}\tilde \alpha_i d(i, v^*)$.

We open a facility at $i$ if $\alpha^*_i > 0$. The total number of open facilities, including the pre-opened facilities, is at most $\sum_{i \in U_{P \setminus r}}\ceil{\alpha^*_i/u_i} + t \leq \sum_{i \in U_{P \setminus r}}\alpha^*_i/u_i + 2 + t \leq (1+1/\ell)y(U_{P \setminus r}) + 2$.

The total moving cost is at most 
	$\sum_{i \in U_{P \setminus r}} (x_{i, \tC} + \alpha^*_i)d(i, v^*) \leq \sum_{i \in U_{P \setminus r}} (x_{i, \tC} + \tilde \alpha_i)d(i, v^*)$.
Focus on a node $p \in P \setminus r, v \in J_p$ and $i \in U_v$. We have
\begin{align*}
	(x_{i, \tC} + \tilde \alpha_i)d(i, v^*) &\leq (x_{i, \tC} + \tilde \alpha_i)(d(i, v) + d(v, v^*)) \leq 2x_{i, C} d(i, v) + \Big(x_{i, \tC} + \frac{x_{U_p, \tC}}{x_{U_p, C}}x_{i, C}\Big)d(v, v^*)\\
	&\leq 2 x_{i, C}d(i, v) + 8\ell \Big(x_{i, \tC} + \frac{x_{U_p, \tC}}{x_{U_p, C}}x_{i, C}\Big) d(J_p, R \setminus J_p).
\end{align*}
The first inequality is by triangle inequality. The second inequality used $x_{i, \tC} \leq x_{i, C}$ and $\tilde \alpha_i \leq x_{i, C}$. By Properties~\ref{property:lengths-decrease}, \ref{property:grey-edge-length} and \ref{property:black-edge-length}, all edges in the path from $v$ to $v^*$ in $\MST$ have length at most $d(J_p, R \setminus J_p)$. By Property~\ref{property:size-small}, we have $d(v, v^*) \leq 8\ell d(J_p, R \setminus J_p)$, implying the third inequality.

Fix $p \in P \setminus r$. We sum up the above inequality over all $v \in J_p$ and $i \in U_v$. The first term becomes
	$$\textstyle 2\sum_{v \in J_p, i \in U_v}x_{i,C}d(i, v) \leq 2\sum_{v \in J_p}(D(U_v) + 4D'(U_v)) = O(1)(D(U_p) + D'(U_p)),$$
by Lemma~\ref{lemma:moving-to-representatives}. The second term becomes 
\begin{align*}
	8\ell \sum_{i \in U_p}\Big(x_{i, \tC} + \frac{x_{U_p, \tC}}{x_{U_p, C}}x_{i, C}\Big) d(J_p, R \setminus J_p) = O(\ell) x_{U_p, \tC}d(J_p, R \setminus J_p).
\end{align*}
If $J_p$ is not concentrated, then  $x_{U_p, \tC} \leq x_{U_p, C} \leq \ell_2 \pi(J_p)$, by the definition of concentrated sets. If $J_p$ is concentrated, then $x_{U_p, \tC} \leq \ell_2 \pi(J_p)$ by Property~\ref{property:concentrated-satisfy-enough-demands}. So,  $x_{U_p, \tC} \leq \ell_2 \pi(J_p)$ always holds. The above quantity is at most  $O(\ell\ell_2)\pi(J_p)d(J_p, R \setminus J_p) \leq O(\ell \ell_2)(D(U_p) + D'(U_p))$ by Lemma~\ref{lemma:x-j-times-one-minus-x-j-small}.

So, we have 
	$\sum_{i \in U_p}(x_{i, \tC} + \tilde \alpha_i)d(i, v^*) \leq O(\ell\ell_2)(D(U_p) + D'(U_p))$.
Summing up over all $p \in P \setminus r$, we obtain
$\sum_{i \in U_{P \setminus r}}(x_{i, \tC} + \tilde \alpha_i)d(i, v^*) \leq O(\ell\ell_2)(D(U_{P \setminus r}) + D'(U_{P \setminus r}))$.

%We have considered the case when $\tau$ is a type-1 tree. When $\tau$ is a type-2 tree, we can use the similar process, except now we need to take the root node $r$ into consideration. So, all the bounds hold with $P \setminus r$ replaced by $P$. That is, if $\tau = (P, E)$ is a type-2 tree rooted at $r$, we can guarantee that the total number of open facilities in $U_P$ is at most $(1+3/\ell)y(U_P)$, and the total moving cost is at most $O(\ell^3)(D(U_{P}) + D'(U_{P}))$.

For a tree $\tau \in \Upsilon$ with root $r = r_\tau$, we also need to move demands within the facilities in $U_r$. Observe that all black edges in $J_r \choose 2$ have length at most $d(J_r, R \setminus J_r)$ and $J_r$ has at most $4\ell$ vertices. Use exactly the same argument as above we can bound the number of open facilities (including the possible pre-opened facilities) in $U_r$ by $(1+\frac1\ell)y(U_r) + 2$ and the moving cost by $O(\ell\ell_2)(D(U_r) + D'(U_r)$. That is, we let $v^*$ be an arbitrary vertex in $J_r$.   We move all demands in $U_r$ first to $v^*$; then we move the demand in $v^*$ to facilities $U_r$, according to the distribution $\{\alpha^*_i\}_{i \in U_r}$ obtained by solving LP\eqref{equ:polytopeobj}, with $U_{P \setminus r}$ replaced by $U_r$.  All the above inequalities hold with $P \setminus r$ replaced with $r$. 

Taking all trees $T \in \calT_\tau$ into consideration, the total number of open facilities in $U_{P_\tau}$ is at most $(1+1/\ell)y(U_{P_\tau}) + 2(|\calT_\tau| + 1)$, where $P_\tau$ is the set of nodes in $\tau$.  By Property~\ref{property:number-trees-small}, this is at most $(1+1/\ell)y(U_{P_\tau}) + 2(y(U_{P_\tau})/\ell + 1) \leq (1+5/\ell)y(U_{P_\tau})$, since $y(U_{P_\tau}) \geq \ell$.  The total moving cost is at most $O(\ell\ell_2)D(U_{P_\tau} + D'(U_{P_\tau}))$.
Taking all trees $\tau \in \Upsilon$ into consideration, the total number of open facilities is at most $(1+5/\ell)y_F \leq (1+5/\ell)k$.  The total moving cost, is $O(\ell\ell_2)(D(F) + D'(F)) = O(\ell\ell_2)\LP$. Setting $\ell = \ceil{5/\eps}$, we obtain our $\Big(1+\eps, O\big(\frac{1}{\eps^2}\log\frac1\eps\big)\Big)$-approximation for \CKM. Due to the pre-opening, a facility may be opened twice in our solution. 

\section{Discussion}%%%%%%%%%%%%%%%%%%%%%%%%%%%%%%%%%%%%%%%%%%%%
\label{sec:discussion}%%%%%%%%%%%%%%%%%%%%%%%%%%%%%%%%%%%%%%%%%%
%%%%%%%%%%%%%%%%%%%%%%%%%%%%%%%%%%%%%%%%%%%%%%%%%%%%%%%%%%%%%%%%

In this paper, we proposed a $\Big(1+\eps, O\big(\frac{1}{\eps^2}\log\frac1\eps\big)\Big)$-approximation for \CKM.  We introduced a novel configuration LP for the problem which has small integrality gap with $(1+\eps)k$ open facilities. %The configuration LP contains a small set of core variables. For each $B \subseteq F$, there is an implicit polytope for which the vector of core variables must reside in.  These polytopes have small extension complexity and the separation oracles for them are obtained by considering their low-complexity extensions to high dimensions. Our rounding algorithm either outputs a good integral solution, or output a polytope which the vector of core variables is not inside. Combined with the ellipsoid method, it gives a polynomial time rounding algorithm. 
%There are many related open problems. 

There are some open problems related to our result. Our algorithm opens at most 2 copies of each facility.  Can we reduce the number of copies for each facility to 1 so that we can extend the result to hard \CKM? Can we get constant approximation for \CKM with $(1+\eps)$-violation on the capacity constraints?  Finally, can we get a true constant approximation for \CKM? The problem is open even for a very special case: all facilities have the same capacity $u$, the number of clients is exactly $n = ku$, and $F = C$(which can be assumed w.l.o.g by \cite{Li15}).

%It is possible that our LP with a more careful rounding algorithm can remove the soft-capacity requirement. A more complicated algorithm might be combining our configuration constraints with the multi-commodity-flow constraints in \cite{ASS14}, which have been used to deal with the hard capacities in \CFL.
%In the fractional solution $(x, y)$ to our LP, if all facilities $i$ have $y_i \leq 1-1/\ell$, then a slight variant of our rounding algorithm can output a solution with at most one copy of each facility open.  We may first open all facilities with $y_i \geq 1-1/\ell$, assign some clients to these facilities and work on the ``residual instance''. The multi-commodity-flow constraints guarantees that the fractional solution is good for the residual instance. 
%Then, can we improve the approximation ratio from $\exp(O(1/\eps^2))$ to $\textrm{poly}(1/\eps)$? There are examples where our rounding algorithm can only give an $\exp(1/\eps)$-approximation. To get a $\textrm{poly}(1/\eps)$-approximation, we need to consider a better algorithm to partition the representatives. 

\bibliographystyle{plain}
\bibliography{reflist}

\appendix

%%%%%%%%%%%%%%%%%%%%%%%%%%%%%%%%%%%%%%
\section{Omitted proofs}%%%%%%%%%%
\label{appendix:proofs}%%%%%%%%%%%%%%%%%%%%%%%%%%
%%%%%%%%%%%%%%%%%%%%%%%%%%%%%%%%%%%%%%

\subsection{Proof of Claim~\ref{claim:representatives}}
\begin{proof}
	First consider Property~\ref{property:representatives-far-away}.  Assume $\dav(v) \leq \dav(v')$. When we add $v$ to $R$, we remove all clients $j$ satisfying $d(v, j) \leq 4\dav(j)$ from $C$. Thus, $v'$ can not be added to $R$ later. For Property~\ref{property:near-a-representative}, just consider the iteration in which $j$ is removed from $C$.  The representative $v$ added to $R$ in the iteration satisfy the property. Then consider Property~\ref{property:bundle-large}. By Property~\ref{property:representatives-far-away}, we have $B:=\set{i \in F:d(i, v) \leq 2\dav(v)} \subseteq U_v$. Since $\dav(v)=\sum_{i \in F}x_{i,v}d(i,v)$ and $\sum_{i \in F}x_{i,v} = 1$, we have $\dav(v) \geq (1-x_{B, v})2\dav(v)$, implying $y(U_v) \geq y_B \geq x_{B, v} \geq 1-\frac12$, due to Constraint~\eqref{LPC:connect-to-open}.
	
	Finally, consider Property~\ref{property:facility-to-representative}. By Property~\ref{property:near-a-representative}, there is a client $v' \in R$ such that $\dav(v') \leq \dav(j)$  and $d(v', j) \leq 4 \dav(j)$. Notice that $d(i, v) \leq d(i, v')$ since $v' \in R$ and $i$ was added to $U_v$. Thus, $d(i, v) \leq d(i, v') \leq d(i, j) + d(j, v') \leq d(i, j) + 4\dav(j)$.
\end{proof}

\subsection{Proof of Claim~\ref{claim:black-component-in-set}}
\begin{proof}
	We only need to prove that all the black edges in $J \choose 2$ are considered before all the edges in $J \times (R \setminus J)$ in the Kruskal's algorithm.  Assume otherwise.  Consider the first edge $e$ in $J \times (R \setminus J)$ we considered. Before this iteration, $J$ is not connected yet.   Then we add $e$ to the minimum spanning tree; since $J$ is a black component, $e$ is gray or white. In either case, the new group $J'$ formed by adding $e$ will have weight more than $\ell$. This implies all edges in $J' \times (R \setminus J')$ added later to the MST are not black. Moreover, $J\setminus J', J' \setminus J$ and $J \cap J'$ are all non-empty. This contradicts the fact that $J$ is a black component. 
\end{proof}

\subsection{Proof of Lemma~\ref{lemma:contracted-tree}}
\begin{proof}
		Focus on any small black component $J$ in $\MST$. By Claim~\ref{claim:black-component-in-set}, it is a group at some iteration of the Kruskal's algorithm. Consider the first iteration that we add an edge in $J \times (R \setminus J)$ to the $\MST$. This edge can not be white because $J$ is small; the edge can not be black since $J$ is a (maximum) black component. Thus, the edge must be a gray edge in $\MST$, directed from $J$ to some other black component.
		
		If for a group $J' \in \calJ$ at some iteration of Kruskal's algorithm, $J \choose 2$ contains a greay or white edge, then $J'$ is  big.   We can only add white edges between two big groups.  Let $\tilde \tau$ be the tree $\tau$ obtained by un-contract all the nodes back to the original black component. The growth of the tree $\tilde \tau$ in the Kruskal's algorithm must be as follows. First, a grey edge is added between two black components, one of them is big and the other is small.  We define the root node $r_\tau$ of $\tau$ to be the node correspondent to the big component.  At each time, we add a new small black component  $J$ to the existing tree via a grey edge with head in $J$. (During this process, white edges incident to the existing tree $\tilde \tau$ may be added.)  So, the tree $\tau$ is a rooted tree with grey edges, where all edges are directed towards the root. This proves Property~\ref{property:rooted-tree} and \ref{property:nodes-in-tree-small}. By the order we add the grey edges, we have Property~\ref{property:lengths-decrease}.  For each small black component $J$, the first grey edge in $J \times (R \setminus J)$ is the grey edge between $J$ is its parent component. Thus, we have Property~\ref{property:grey-edge-length}.  This edge is added after the black component $J$ is formed; thus we have Property~\ref{property:black-edge-length}.
\end{proof}

\subsection{Proof of Lemma~\ref{lemma:x-j-times-one-minus-x-j-small}}
\begin{proof}
Let $B = U_J$. For every $i \in B, j \in C$, we have $d(i, J) \leq d(i, j) + 4\dav(j)$ by Property~\ref{property:facility-to-representative} in Claim~\ref{claim:representatives} and the fact that $i \in U_v$ for some $v \in J$. Thus, 
\begin{align*}
d(J, R\setminus J)\pi(J) &= d(J, R \setminus J)\sum_{j \in C}x_{B, j}(1-x_{B,j}) = d(J, R \setminus J)\sum_{j \in C, i \in B, i' \in F \setminus B}x_{i,j}x_{i',j}\\
&\leq \sum_{i \in B, j \in C, i' \in F \setminus B}x_{i,j} x_{i',j} \cdot 2d(i', J) 
\leq 2\sum_{i \in B, j \in C}x_{i,j}\sum_{i' \in F}x_{i',j}\big(d(i',j) + d(j, i) + d(i, J)\big)\\
&=2\sum_{i \in B, j \in C}x_{i,j} \Big(\dav(j) + d(j, i) + d(i, J)\Big) \leq 2\sum_{i \in B, j \in C}x_{i,j} \Big(2d(i, j) + 5\dav(j) \Big) \\
&=2\sum_{i \in B}(2D_i + 5D'_i) = 4D(U_J) + 10D'(U_J).
\end{align*}
In the above sequence, the first inequality is by $d(J, R \setminus J) \leq 2 d(i', J)$ for any $i' \in F\setminus B = U_{R \setminus J}$: $d(i', R \setminus J) \leq d(i', J)$ implies $d(R \setminus J, J) \leq d(R \setminus J, i') + d(i', J) \leq 2d(i', J)$. The second inequality is by triangle inequality and the third one is by $d(i, J) \leq d(i, j) + 4\dav(j)$. All the equalities are by simple manipulations of notations. 
\end{proof}

\subsection{Pre-assignment of clients: proof of Lemma~\ref{lemma:dealing-with-concentrated-sets}}

\newcommand{\rank}{\mathsf{rank}}
%TODO: (1+1/ell) or 1/(1-1/\ell), 
\begin{proof}
 Let $Y = (1+1/\ell)y_B$.    For any $S \in \calS$ such that $|S| \leq Y$, we give $S$ a rank. If $Y - |S| < 1$, then let $\rank(S) = 0$. Otherwise, if $Y - |S| \in [2^{t-1}, 2^t)$ for some integer $t \geq 1$, then we let $\rank(S) = t$.  So, the rank of $S$ is an integer between $0$ and $\delta - 1 := \floor{\log Y} + 1 = O(\log \ell)$. 

%TODO: \ell_1 + 1 to \ell_1

We take the assignment of $z$ variables that satisfies Constraints~\eqref{CLPC:add-to-one} to \eqref{CLPC:more-than-ell-facilities} for $B = U_J$.  We have $\sum_{S \in \calS}z^B_S |S| + z^B_\bot \ell_1 \leq y_B$ and $\sum_{S \in \tcalS}z^B_S = 1$. So,
$\sum_{S \in \calS}z^B_S(Y - |S|) + z^B_\bot (Y - \ell_1) \geq Y - y_B = y_B/\ell$. This implies $\sum_{S \in \calS:|S| < Y}z^B_S(Y - |S|) \geq y_B/\ell$ as $Y \leq 2\ell(1+1/\ell)  = \ell_1$. Define a function $f:\{0, 1, 2, \cdots,\delta \} \to \R$ as $f(0) = Y - \floor{Y}$ and $f(t) = 2^t$ if $t \geq 1$.  If a set $S$ has rank $t \geq 1$, then $Y-|S| \in [f(t)/2, f(t)]$. If a set $S$ has rank 0, then $|S| = \floor{Y}$ and $Y-|S| = Y - \floor{Y} = f(0)$. Thus, we always have $Y-|S| \in [f(\rank(S))/2, f(\rank(S))]$. 
\begin{align*}
\sum_{S \in \calS:|S| < Y}z^B_S f(\rank(S)) \geq  \sum_{S \in \calS:|S| < Y}z^B_S(Y - |S|) \geq \frac{y_B}{\ell}.
\end{align*}

So, there is an integer $t \in [0, \delta-1]$ such that
\begin{align*}
	f(t) \sum_{S \in \calS: \rank(S) = t} z^B_S   \geq \frac{y_B}{\delta\ell}.
\end{align*}

Since $\sum_{S \in \tcalS} z^B_S = 1$, the variables $\set{z^B_S}_{S \in \tcalS}$ defines a distribution over $\tcalS$.  We randomly select an element $S \in \tcalS$: $S$ is selected with probability $z^B_S$.  Let $* = \big\{S \in \calS:\rank(S) = t\}$; we are interested in the event that $S \in *$.  Let $q = \Pr[S \in *]$. Then
\begin{align*}
	q = \sum_{S \in *}z^B_S \geq \frac{y_B}{\delta\ell f(t)} \geq \Omega\big(\frac{1}{\ell \log \ell}\big), \qquad \text{as }f(t) \leq 2Y \leq 3y_B. 
\end{align*}

%\begin{proof}
%This is essentially by Markov inequality. By Constraints~\eqref{CLPC:add-to-y},~\eqref{CLPC:i-irrelevant} and~\eqref{CLPC:more-than-ell-facilities}  we have
%\begin{align*}
%\sum_{S \in \calS}z^B_S |S| + z^B_\bot(\ell_1 + 1)  &\leq \sum_{S \in \calS}\sum_{i \in S}z^B_{S, i} + \sum_{i \in B}z^B_{\bot, i}
%=\sum_{i \in B}\sum_{S \in \tcalS: i \in S}z^B_{S, i} = \sum_{i \in B}y_i = y_B.
%\end{align*}

%Noticing that $y_B/(1-1/\ell) \leq \ell_1 + 1$, we have
%\begin{align*}
%	y_B \geq \sum_{S \in \calS}z^B_S|S| + \frac{z^B_\bot y_B}{1-1/\ell} \geq \sum_{S \in \calS \setminus *}\frac{y_B}{1-1/\ell} + \sum_{S \in \tcalS \setminus *}z^B_S \frac{y_B}{1-1/\ell} = \frac{y_B}{1-1/\ell}\sum_{S \in \tcalS \setminus *}z^B_S.
%\end{align*}
%That is $\Pr[S \notin *] =\sum_{S \in \tcalS \setminus *}z^B_S \leq 1-1/\ell$, implying that $\Pr[S \in *] \geq 1/\ell$.
%\end{proof}

Now assume we have selected a set $S \in *$. For every facility $i\in S$ and every client $j \in C$, let $w_{i, j} := z^B_{S, i, j}/z^B_{S, i} = z^B_{S, i,j}/z^B_S$, by Constraint~\eqref{CLPC:i-irrelevant}.  Notice that $w_{i,j} \in [0, 1]$ by Constraint~\eqref{CLPC:non-negative}, $\sum_{i \in S}w_{i,j} \leq 1$ for every $j \in C$ by Constraint~\eqref{CLPC:j-connection-bound}, and $\sum_{j \in C}w_{i, j} \leq u_i$ for every $i \in S$ by Constraint~\eqref{CLPC:capacity}.
Thus, $w$ is a fractional matching between $S$ and $C$ such that every facility in $i \in S$ is matched at most $u_i$ times and every client in $C$ is matched at most once. By the integrality of matching, $w$ can be expressed as a convex combination of integral matchings.  We randomly choose a matching according to the distribution defined by the combination. Let $C' \subseteq C$ be the set of matched clients. So, Property~\ref{property:concentrated-capacity} holds. For a fixed $S \in *$, the probability that $j \in C'$ is exactly $\sum_{i \in S}w_{i,j}$.

We now prove that Properties~\ref{property:concentrated-satisfy-enough-demands} and~\ref{property:concentrated-distance-small} hold simultaneously with probability at least 1/3, conditioned on $S \in *$.  To bound the probability for Property~\ref{property:concentrated-satisfy-enough-demands}, we focus on any $j \in C$ and bound its probability in $C'$. 
\begin{align}
\Pr[S \in *, j \in C'] &= \sum_{S \in *}z^B_S \sum_{i \in S}z^B_{S, i, j}/{z^B_S} = \sum_{S \in *, i \in S}z^B_{S, i, j} = x_{B,j} - \sum_{S \in \tcalS \setminus *, i \in S}z^B_{S, i, j}\nonumber \\
&\geq x_{B, j} - \sum_{S\in \tcalS \setminus *}z^B_S = x_{B, j} - 1 +  q, \label{equ:pr-j-in-J-big}
\end{align}
where the third equality is by summing up Constraint~\eqref{CLPC:add-to-x} for all facilities $i \in B$ and the inequality is by Constraint~\eqref{CLPC:j-connection-bound}.  

Let $C'' := \set{j \in C:x_{B, j} \geq 1-q}$. For any $j \in C''$, we have 
\begin{align*}
\Pr[j \notin C'\ |\ S \in *] = 1-\frac{\Pr[S \in *, j \in C']}{\Pr[S \in *]} \leq 1 - \frac{x_{B,j} - 1 + q}{q} = \frac{1-x_{B,j}}{q}.
\end{align*}
\vspace*{-\abovedisplayskip}
\begin{align}
\text{Thus, }\qquad \E\big[x_{B, C \setminus C'}\ |\ S \in *\big] &= \sum_{j \in C}x_{B,j}\Pr[j \notin C'\ |\ S \in *] \leq \sum_{j \in C''}x_{B, j}\frac{1-x_{B,j}}{q} + \sum_{j \in C \setminus C''}x_{B, j} \nonumber\\ 
&\leq \frac1q\sum_{j \in C}x_{B,j} (1-x_{B,j}) = \frac{\pi(J)}{q}, \label{inequ:satisfy-enough-demand}
\end{align}
where the equality is by linearity of expectation, and the second inequality is by $1-x_{B,j} > q, \forall j \in C \setminus C''$. 

Now, focus on Property~\ref{property:concentrated-distance-small}. For every $i \in B, j \in C$, we use $j \rightarrow i$ to denote the event that $j$ is pre-assigned to $i$. Then 
$\displaystyle \Pr[S \in *, j \rightarrow i] = \sum_{S \in *}z^B_S z^B_{S, i, j} / z^B_S = \sum_{S \in *} z^B_{S, i, j} \leq x_{i,j},$
where the inequality is by Constraint~\eqref{CLPC:add-to-x}. Thus, $\Pr[j \rightarrow i\ |\ S \in *] \leq x_{i,j}/q$. Therefore, 
\begin{align*}
\E[\text{total connection cost}\ |\ S \in *] = \sum_{i \in B, j \in C} d(i,j)\Pr[j \rightarrow i | S \in *] \leq \frac1q \sum_{i \in B, j \in C}d(i,j) x_{i,j}  = D_B/q.
\end{align*}

We set $\ell_2 = \Theta(\ell\log \ell)$ to be large enough so that $3 /q \leq \ell_2$.  By Markov inequality, $$\Pr[\text{total connection cost} \geq \ell_2 D_B\ |\ S \in *] \leq 1/3.$$ 
Also, by Markov inequality and Inequality~\eqref{inequ:satisfy-enough-demand}, we have $$\Pr[x_{B, C \setminus C'} \geq 3\pi(J)/q \ |\ S \in *] \leq 1/3.$$  By the union bound, with probability at least 1/3,  Property~\ref{property:concentrated-distance-small} holds and $x_{B, C \setminus C'} \leq 3\pi(J)/q \leq \ell_2\pi(J)$, which is exactly Property~\ref{property:concentrated-satisfy-enough-demands}. Since $J$ is concentrated, this implies $x_{B, C \setminus C'} \leq 3x_{B, C}/(\ell_2 q)$. Notice that $Y-|S| \geq f(t)/2 \geq \frac{y_B}{2\ell q \delta}$. We set $\ell_2 = \Theta(\ell \log \ell)$ to be large enough so that $6\ell \delta \leq l_2$. Then, $Y- |S| \geq \frac{3y_B}{\ell_2 q}$, implying that $\frac{x_{B, C \setminus C'}}{x_{B,C}}y_B + |S| \leq \frac{3y_B}{\ell_2 q} + |S| \leq Y$. This is exactly Property~\ref{property:concentrated-not-too-many-facilities}.
\end{proof}

\subsection{Proof of Claim~\ref{claim:solution-in-polytope}}
\begin{proof}
	Clearly, $\tilde \alpha_i \in [0, u_i]$ for every $i \in U_{P \setminus r}$. For any node $p \in P \setminus r$, we have ${\tilde \alpha}(U_p) = \frac{x_{U_p, \tC}}{x_{U_p, C}}x_{U_p, C} = x_{U_p, \tC}$. Summing up the inequalities over all $p \in P \setminus r$ gives us $\tilde \alpha(U_{P \setminus r}) = x_{U_{P \setminus r}, \tilde C}$. 
	
	Focus on a node $p \in P \setminus r$. We have $\sum_{i \in U_p}\frac{\tilde \alpha_i}{u_i} = \frac{x_{U_p, \tC}}{x_{U_p, C}}\sum_{i \in U_p} \frac{x_{i, C}}{u_i} \leq \frac{x_{U_p, \tC}}{x_{U_p, C}}\sum_{i \in U_p} y_i = \frac{x_{U_p, \tC}}{x_{U_p, C}}y(U_p)$.  If $J_p$ is not concentrated, the bound is at most $y(U_p)$. If $J_p$ is concentrated, then let $t_p$ be the number of pre-opened facilities in $U_p$ using Lemma~\ref{lemma:dealing-with-concentrated-sets}. Then,
	$\sum_{i \in U_p}\frac{{\tilde \alpha}_i}{u_i} + t_p \leq \frac{x_{U_p, \tC}}{x_{U_p, C}}y(U_p) + t_p \leq \big(1+\frac1\ell\big)y(U_p).$
	Summing up the bounds over all $p \in P \setminus r$, we have $\sum_{i \in U_{P \setminus r}}\frac{\tilde \alpha_i}{u_i} + t \leq \big(1+\frac1\ell\big)y(U_{P \setminus r})$. So, all the constraints of LP\eqref{equ:polytopeobj} are satisfied.	
\end{proof}

\end{document}